\newcolumntype{L}[1]{>{\raggedright\let\newline\\\arraybackslash\hspace{0pt}}p{#1}}
\newcolumntype{C}[1]{>{\centering\let\newline\\\arraybackslash\hspace{0pt}}p{#1}}
\newcolumntype{R}[1]{>{\raggedleft\let\newline\\\arraybackslash\hspace{0pt}}p{#1}}
\newcommand{\pr}{{\text{pr}}}
\newcommand{\bfs}{{\boldsymbol s}}
\newcommand{\bfS}{{\mathbf S}}
\newcommand{\bfY}{{\mathbf Y}}
\newcommand{\bfZ}{{\mathbf Z}}
\newcommand{\bfy}{{\boldsymbol y}}
\newcommand{\bfalpha}{{\boldsymbol{\alpha}}}
\newcommand{\bbb}{{\boldsymbol \beta}}
\newcommand{\YY}{\mbox{$\mathbf Y$}}
\newcommand{\bfdelta}{{\boldsymbol \delta}}
\newcommand{\bfkappa}{{\boldsymbol \kappa}}
\newcommand{\bfmu}{{\boldsymbol \mu}}
\theoremstyle{plain}
\newtheorem{theorem}{Theorem}
\newtheorem{lemma}[theorem]{Lemma}
\newtheorem{definition}{Definition}
\newtheorem{proposition}{Proposition}
\newtheorem{remark}{Remark}
\newtheorem{assumption}{Assumption}
\title{\LARGE A Note on Ising Network Analysis with Missing Data}
\author{Siliang Zhang and Yunxiao Chen}
\date{}
\begin{document}
\setstretch{1.4}
\maketitle

\begin{abstract}

The Ising model has become a popular psychometric model for analyzing item response data. The statistical inference of the Ising model is typically carried out via a pseudo-likelihood, as the standard likelihood approach suffers from a high computational cost when there are many variables (i.e., items). Unfortunately, the presence of missing values can hinder the use of 
pseudo-likelihood, and a listwise deletion approach for missing data treatment may introduce a substantial bias into the estimation and sometimes yield misleading interpretations. This paper proposes a conditional Bayesian framework for Ising network analysis with missing data, which integrates a pseudo-likelihood approach with iterative data imputation. 
An asymptotic theory is established for the method. 
Furthermore, a computationally efficient {P{\'o}lya}-Gamma data augmentation procedure is proposed to streamline the 
sampling of model parameters. The method's performance is shown through  simulations and a real-world application to data on major depressive and generalized anxiety disorders from the National Epidemiological Survey on Alcohol and Related Conditions (NESARC).

\end{abstract}	
\noindent
KEYWORDS: Ising model, iterative imputation, full conditional specification, network psychometrics, mental health disorders, major depressive disorder, generalized anxiety disorder

\section{Introduction}

Recent years have witnessed the emergence of network psychometrics \citep{vanderMaas2006,borsboom2008psychometric,marsman2022guest}, a family of statistical graphical models and related inference procedures, for analyzing and interpreting the dependence structure in psychometric data. These models embed psychometric items as nodes in an undirected or directed network (i.e., graph) and visualize their interrelationships through the network edges, which represent certain probabilistic conditional dependencies. Network psychometric methods concern the learning of the network structure. They have been developed under various 
settings, including undirected graphical models for cross-sectional data \citep[]{epskamp2018gaussian,burger2022reporting}, directed networks for longitudinal data \citep[]{gile2017analysis,borsboom2021network,ryan2022challenge}, {and extended networks with latent variables for time-series data or panel data \citep[]{epskamp2020psychometric}.} These methods have received wide applications in education \citep[]{sweet2013hierarchical,willcox2017network,koponen2019using,siew2020applications,simon2021network}, 
psychology \citep[]{burgess1999memory,van2017network,fried2017mental,epskamp2018estimating,borsboom2021network}, and health sciences \citep[]{luke2007network,brunson2018applications,mkhitaryan2019network,kohler2022using}.  

Analyzing cross-sectional binary item response data with the Ising model \citep[]{Ising1925} is common in network psychometric analysis. This analysis is typically performed based on a conditional likelihood \citep[]{besag1974spatial} because the standard likelihood function 
is computationally infeasible when involving many variables. In this direction, Bayesian and frequentist methods have been developed, where sparsity-inducing priors or penalties are combined with the conditional likelihood for learning a sparse network structure \citep[]{yuan2007model,mazumder2012graphical,van2014new,epskamp2018tutorial,li2019graphical,marsman2022objective}. Besides, the Ising model is shown to be closely related to Item Response Theory (IRT) models \citep{holland1990dutch,anderson2007log}. The log-multiplicative association models \citep{anderson2007log}, which are special cases of the Ising model, can be used as item response theory models and yield very similar results as IRT models. Furthermore, the Ising model and the conditional likelihood have been used for modeling the local dependence structure in locally dependent IRT models \citep{ip2002locally,chen2018robust}. 

Due to its construction, the conditional likelihood does not naturally handle data with missing values, despite the omnipresence of missing data in psychometric data. 
To deal with missing values in an Ising network analysis, listwise deletion \citep[]{haslbeck2017predictable,fried2020using} and single imputation \citep[e.g.,][]{huisman2009imputation,armour2017network,lin2020association} are typically performed, which arguably may not be the best practice. 
In particular, it is well-known that
listwise deletion is statistically inefficient and requires 
the Missing Completely At Random (MCAR) assumption \citep[]{little2019statistical}  to ensure consistent estimation. 
Moreover, a  na\"ive imputation procedure, such as mode imputation, likely introduces bias into parameter estimation. A sophisticated imputation procedure must be developed to ensure statistical validity and computational efficiency.

In this note, we propose an iterative procedure for learning an Ising network. The proposed procedure combines iterative imputation via Full Conditional Specification \citep[FCS;][]{liu2014stationary,buurenFlexibleImputationMissing2018} and Bayesian estimation of the Ising network. We show that the FCS leads to estimation consistency when the conditional models are chosen to take logistic forms. In terms of computation, we propose a joint {P{\'o}lya}-Gamma augmentation procedure by extending the 
{P{\'o}lya}-Gamma augmentation procedure for logistic regression \citep[]{polson2013bayesian}. It allows us to efficiently sample parameters of the Ising model. Simulations are conducted to compare the proposed procedure with estimations based on the listwise deletion and single imputation. Finally, the proposed procedure and a complete-case analysis are applied to study the network of Major Depressive Disorder (MDD) and Generalised Anxiety Disorders (GAD) based on data from the National Epidemiological Survey on Alcohol and Related Conditions \citep[NESARC;][]{grant2003source}. In this analysis, data missingness is mainly due to two screening items for GAD. That is, a respondent's responses to the rest of the MDD items are missing if they answered ``no" to both screening items. This missing mechanism is Missing at Random \citep[MAR;][]{little2019statistical}. The complete-case analysis of missing data caused by screening items is known to be problematic in the literature of network psychometrics \citep{borsboom2017false,mcbride2023quantifying}. Our Bayesian estimate of the edge coefficient between the two screening items is negative based on the complete cases, which can be seen as a result of Berkson’s paradox \citep{de2021psychological}. In contrast, the proposed method makes use of all the observed data entries and obtains a positive estimate of this edge coefficient. 
An identifiability result about the Ising model under this special missing data setting {in the Appendix}, the item content, and a simulation study mimicking this setting suggest that the result given by the proposed method is sensible.

\section{Proposed Method}\label{sec:model_method}
 
\subsection{Ising Model}

Consider a respondent answering $J$ binary items. Let $\YY = (Y_1, ..., Y_J)^\top \in \{0,1\}^J$ be a binary random vector representing the respondent's responses. We say $\YY$ follows an Ising model if its probability mass function satisfies  
\begin{equation}\label{eq:ising_model}
  \begin{aligned}
    P(\YY = \bfy \mid \bfS) &= \frac{1}{c(\bfS)}\exp\left[\frac{1}{2}\bfy^\top\bfS\bfy\right]  =\frac{1}{c(\bfS)}\exp\left[\sum_{j=1}^J s_{jj}y_{j}/2+\sum_{j=1}^{J-1}\sum_{k=j+1}^Js_{jk}y_{j}y_k\right],
  \end{aligned}
\end{equation}
where $\bfS = (s_{ij})_{J\times J}$ is a $J$ by $J$ symmetric matrix that contains parameters of the Ising model
and 
$$c(\bfS) = \sum_{\bfy\in\{0,1\}^J}\exp\left[\sum_{j=1}^J s_{jj}y_{j}/2+\sum_{j=1}^{J-1}\sum_{k=j+1}^Js_{jk}y_{j}y_k\right]$$
is a normalizing constant. The parameter matrix $\bfS$ encodes a network with the $J$ items being the nodes. More specifically, an edge is present between nodes $i$ and $j$ if and only if the corresponding entry $s_{ij}$ is nonzero.
If an edge exists between nodes $i$ and $j$, then $Y_i$ and $Y_j$ are conditionally dependent given the rest of the variables. Otherwise,  the two variables are conditionally independent.


In Ising network analysis, the goal is to estimate the parameter matrix $\bfS$. 
The standard likelihood function is computationally intensive when $J$ is large, as it 
requires computing a normalizing constant $c(\bfS)$ which involves a summation of all the $2^J$ response patterns. To address this computational issue, \cite{besag1975statistical} proposed a conditional likelihood which is obtained by aggregating the conditional distributions of $Y_j$ given $\YY_{-j} = (Y_1, ..., Y_{j-1}, Y_{j+1}, ..., Y_J)^\top$, for $j=1, ..., J$, where the conditional distribution of $Y_j$ given $\YY_{-j}$ takes a logistic regression form. More precisely, the conditional likelihood with one observation $\bfy$ is defined as 
 
\begin{equation}\label{eq:ising_model_pseudo}
  \begin{aligned}
    p^*(\bfy\mid \bfS) &= \prod_{j=1}^J p(y_j\mid \bfy_{-j},\bfS)= \prod_{j=1}^J\frac{\exp\left[(s_{jj}/2 + \sum_{k\neq j} s_{jk}y_k)y_j\right]}{1+\exp\left(s_{jj}/2 + \sum_{k\neq j} s_{jk}y_k\right)}. 
  \end{aligned}
\end{equation}
A disadvantage of  the conditional likelihood is that it requires a fully observed dataset because missing values cannot be straightforwardly marginalized out from \eqref{eq:ising_model_pseudo}. In what follows, we discuss how missing data can be treated in the conditional likelihood.

\subsection{Proposed Method}

Consider a dataset with $N$ observations. Let $\Omega_{j} \subset \{1, ..., N\}$ be the subset of observations whose data on item $j$ are missing. For each observation $i$ and item $j$, $y_{ij}$ denotes the observed response if $i \notin \Omega_j$, and otherwise, $y_{ij}$ is missing. Thus, the observed data include $\Omega_j$ and $y_{ij}$, for $i\in \{1, ..., N\}\setminus\Omega_j$ and $j = 1, ..., J$. 

The proposed procedure iterates between two steps -- (1) imputing
the missing values of $y_{ij}$ for $i \in \Omega_j$, $j=1, ..., J$, achieved via a full conditional specification, 
and (2) sampling the posterior distribution of $\bfS$ given the most recently imputed data. 
Let $t$ be the current iteration number. Further, let 
 $\bfy^{(t-1)}_i = (y_{i1}^{(t-1)}, ..., y_{iJ}^{(t-1)})^\top, i=1, ..., N$, be imputed data from the previous iteration, where $y^{(t-1)}_{ij} = y_{ij}$ for $i \notin \Omega_j$ and  $y^{(t-1)}_{ij}$ is imputed in the $(t-1)$th iteration for $i \in \Omega_j$. 
For the $t$th iteration, the imputation and sampling steps are described as follows.

\paragraph{Imputation.} We initialize the imputation in the $t$th iteration 
with the previously imputed data set $(y_{i1}^{(t-1)},\ldots,y_{iJ}^{(t-1)})$, $i=1, ..., N$. Then, we run a loop over all the items, $j = 1, ..., J$. In step $j$ of the loop, we impute $y_{ij}$ for all $i \in \Omega_j$, 
given the most recently imputed data 
 $(y_{i1}^{(t)},\ldots,y_{i,j-1}^{(t)},y_{ij}^{(t-1)},\ldots,y_{iJ}^{(t-1)})$, $i = 1, ..., N$. We then obtain 
 an updated data set $(y_{i1}^{(t)},\ldots,y_{i,j}^{(t)},y_{i,j+1}^{(t-1)},\ldots,y_{iJ}^{(t-1)})$ by incorporating the newly imputed values for $y_{ij}$.

The imputation of each variable $j$ is based on the conditional distribution of $Y_j$ given $\YY_{-j}$. Under the Ising model, this conditional distribution takes a logistic regression form. For computational reasons to be discussed in the sequel, we introduce an auxiliary parameter vector $\bbb_j = (\beta_{j1}, ..., \beta_{jJ})^\top$ as coefficients in the logistic regression, instead of directly using $\bfS$ from the previous iteration to sample the missing $y_{ij}$s. Unlike the constraint of $s_{ij} = s_{ji}$ in the symmetric matrix $\bfS$, no constraints are imposed on $\boldsymbol{\beta}_j$, $j=1, ..., J$, which makes the sampling computationally efficient; see discussions in Section~\ref{subsec:comp}. 
The imputation of variable $j$ consists of the following two steps: 

\begin{enumerate}
    \item Sample auxiliary parameter vector $\bbb_j^{(t)}$ from the posterior distribution 
    \begin{equation}\label{eq:post_beta}
        p^{(t,j)}(\bbb_j) ~\propto~ \pi_j(\bbb_j)\prod_{i=1}^N \frac{\exp\left[(\beta_{jj}/2+\sum_{k\neq j}\beta_{jk}y_{ik}^{(t,j-1)})y_{ij}^{(t,j-1)}\right]}{1+\exp(\beta_{jj}/2+\sum_{k\neq j}\beta_{jk}y_{ik}^{(t,j-1)})},
    \end{equation}
where $\pi_j(\bbb_j)$ is the prior distribution for the auxiliary parameters $\bbb_j$.

    \item Sample $y_{ij}^{(t)}$ for each $i \in \Omega_j$ from a Bernoulli distribution with success probability 
\begin{equation}\label{eq:sample_y}
   \frac{\exp(\beta_{jj}^{(t)}/2+\sum_{k\neq j}\beta_{jk}^{(t)}y_{ik}^{(t,j-1)})}{1+\exp(\beta_{jj}^{(t)}/2+\sum_{k\neq j}\beta_{jk}^{(t)}y_{ik}^{(t,j-1)})}. 
\end{equation}

\end{enumerate}
After these two steps, we obtain 
$(y_{i1}^{(t)},\ldots,y_{i,j}^{(t)},y_{i,j+1}^{(t-1)},\ldots,y_{iJ}^{(t-1)})$ by incorporating the newly imputed values for $y_{ij}, i\in\Omega_j$.
We emphasize that only the missing values are updated. For $i  \notin \Omega_j$, 
$y_{ij}^{(t)}$ is always the observed value of $y_{ij}$. After the loop over all the items, 
we have the imputed data set $(y_{i1}^{(t)},\ldots,y_{iJ}^{(t)})$ as the output from this imputation step.

\paragraph{Sampling $\bfS$.} Given the most recently imputed data $\bfy_i^{(t)}$, $i=1, ..., N$, 
update $\bfS^{(t)}$ by sampling from the pseudo-posterior distribution
\begin{equation}\label{eq:s_cond}
    p(\bfS\mid\bfy_1^{(t)},\ldots,\bfy_N^{(t)})~\propto~ \pi(\bfS)\prod_{i=1}^N p^*(\bfy_i^{(t)} \mid \bfS),
\end{equation}
where $\pi(\bfS)$ is the prior distribution for the Ising  parameter matrix $\bfS$ and recall that $\prod_{i=1}^N p^*(\bfy_i^{(t)} \mid \bfS)$ is the conditional likelihood. 

\begin{figure}
    \centering
    \includegraphics[scale=0.45]{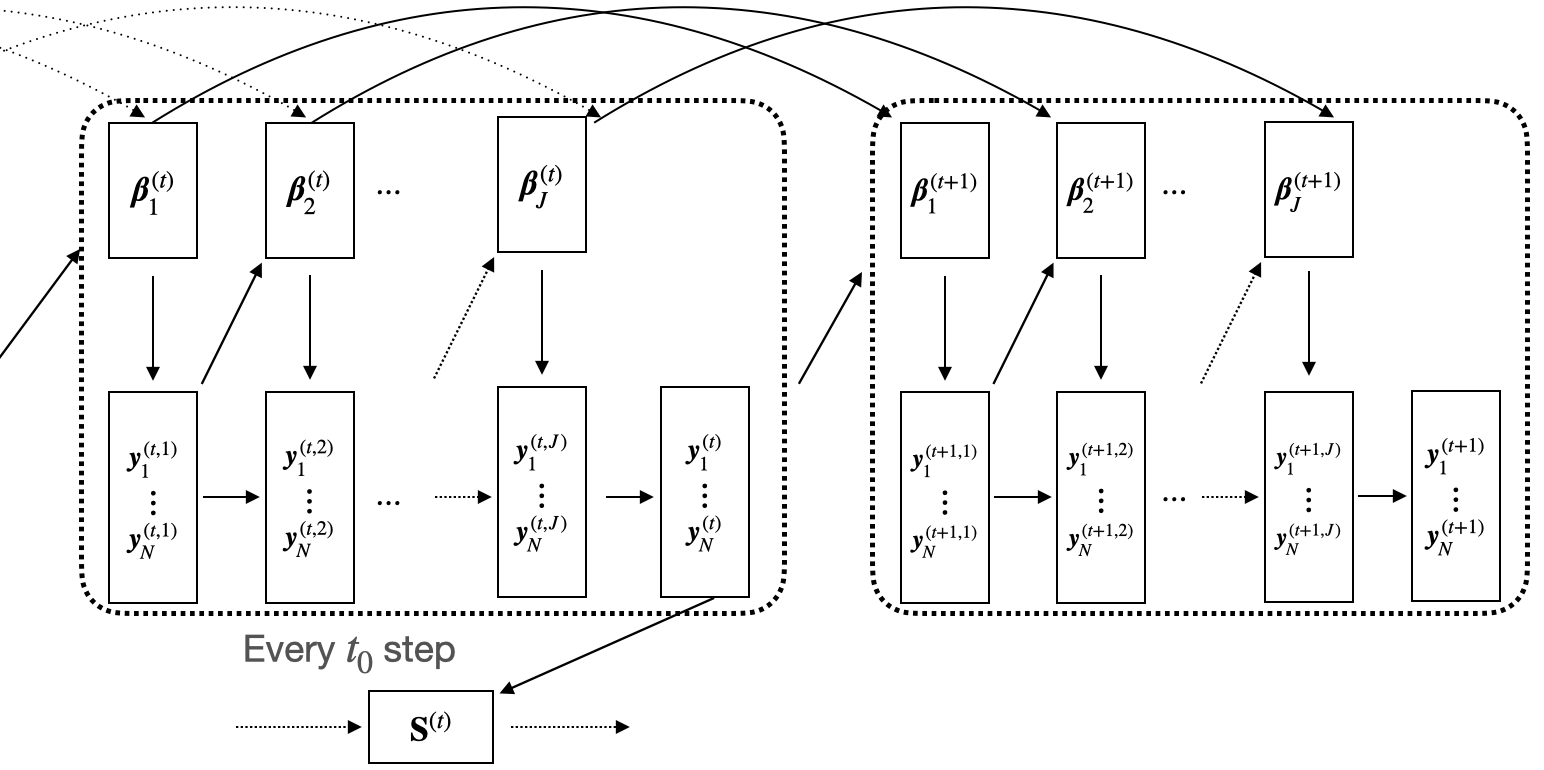}
    \caption{Flow chart of the updating rule for the proposed method.}
    \label{fig:diagram}
\end{figure}

\begin{algorithm}
  \SetKwInOut{Output}{Output}
  \caption{Ising Network Analysis with Iterative Imputation}\label{alg:pseudo_imputation}
  \setstretch{1.2}
  \KwData{Given observed data, initial values for the Ising model parameters $\bfS^{(0)}$, randomly imputed missing data $\bfy_1^{(0)},\ldots,\bfy_N^{(0)}$, MCMC length $T$, burn-in size $T_0$, thinning steps size $t_0$. Let auxiliary parameters $\bbb_j^{(0)}=\bfs_j^{(0)},j=1,\ldots,J$.}
  \For{each iteration $t = 1$ to $T$}{
    
    \For{each $j = 1$ to $J$}{      
      
      {Sample auxiliary parameter vector $\bbb_j^{(t)}$ from $p^{(t,j)}(\bbb_j).$}

      {Sample $y_{ij}^{(t)}$ for each $i\in\Omega_j$ from the Bernoulli distribution given in \eqref{eq:sample_y}.} 
    }
    \If{$t>T_0$ and $t$ is a multiple of $t_0$}{
    Sample  $\bfS^{(t)}$  from $p(\bfS\mid \bfy_1^{(t)},\ldots,\bfy_N^{(t)})$ given in \eqref{eq:s_cond}.
    }
  } 
  \Output{$\hat\bfS = \frac{1}{M- M_0}\sum_{t \in \{(M_0+1)t_0, ..., Mt_0\}}\bfS^{(t)},$ where $M=\lfloor T/t_0\rfloor$ and $M_0=\lfloor T_0/t_0\rfloor$.}
\end{algorithm}
Figure~\ref{fig:diagram} visualizes the steps performed in the proposed method. Note that it is unnecessary to sample the parameter matrix $\bfS$ during the burn-in period and in 
every iteration after the burn-in period; thus, we employ a thinning step after the burn-in period. This is done to both decrease computational cost and reduce the auto-correlation in the imputed data. 
Moreover, we outline the proposed algorithm in Algorithm~\ref{alg:pseudo_imputation}. The final estimate of $\bfS$ is obtained by averaging all the $\bfS^{(t)}$ obtained after the burn-in period.
The computational details, including the sampling of auxiliary parameters and Ising parameter matrix and discussions of the computational complexity, are given in Section~\ref{subsec:comp}.

We remark that our method imputes the missing variables one by one for each observation. This method is chosen because simultaneously imputing all the missing variables is typically computationally infeasible, especially when some observation units have many missing values. Simultaneous imputation requires evaluating the joint distribution of the missing variables given the observed ones, whose computational complexity grows exponentially with the number of missing values. In contrast, the proposed method is based on unidimensional conditional distributions, which is computationally more feasible. We also note that the proposed method has several variants that should also work well. These variants are discussed in Section~\ref{sec:conclusion}.


\subsection{Statistical Consistency}\label{par:consistency}
As our method is not a standard Bayesian inference procedure, we provide  an asymptotic theory under the frequentist setting to justify its validity. In particular, we show that 
 the $\bfS$ parameter sampled from the pseudo-posterior distribution converges to the true parameter $\bfS_0$,  under the assumptions that the Ising model is correctly specified and the data are MAR.

Consider one observation with a complete data vector $\YY = (Y_1, ..., Y_J)^\top$. Further, let $\bfZ = (Z_1, ..., Z_J)^\top$ be a vector of missing indicators, where $Z_{ij} = 1$ if $Y_{ij}$ is observed and  $Z_{ij} = 0$ otherwise. We further let  $\YY_{obs} = \{Y_j: Z_j = 1, j = 1, ..., J\}$ and $\YY_{mis} = \{Y_j: Z_j = 0, j = 1, ..., J\}$ be the observed and missing entries of $\YY$, respectively. Consider the joint distribution of observable data $(\YY_{obs}, \bfZ)$, taking the form

\begin{equation}\label{eq:joint} 
   P(\YY_{obs} = \bfy_{obs}, \bfZ = \mathbf z\mid \bfS, \boldsymbol{\phi}) = \sum_{y_j:z_j=0} \left( \exp\left(\bfy^\top\bfS\bfy/2\right)/{c(\bfS)} \right) q(\mathbf z \mid \bfy,\boldsymbol{\phi}),
\end{equation}
where $\exp\left(\bfy^\top\bfS\bfy/2\right)/{c(\bfS)}$ is the distribution of $\YY = \bfy$ under the Ising model,  
 $q(\mathbf z \mid \bfy,\boldsymbol{\phi})$ denotes the conditional probability of $\bfZ = \mathbf z$ given $\YY = \bfy$, and $\boldsymbol \phi$ denotes the unknown parameters of this distribution. The MAR assumption, also known as the ignorable missingness assumption,  means that the conditional distribution $q(\mathbf z \mid \bfy,\boldsymbol{\phi})$ depends on $\bfy$ only through the observed entries, i.e., $q(\mathbf z \mid \bfy,\boldsymbol{\phi}) = q(\mathbf z \mid \bfy_{obs},\boldsymbol{\phi})$. In that case, \eqref{eq:joint} can be factorized as 
\begin{equation}\label{eq:joint2} 
   P(\YY_{obs} = \bfy_{obs}, \bfZ = \mathbf z\mid \bfS, \boldsymbol{\phi}) = q(\mathbf z \mid \bfy_{obs},\boldsymbol{\phi}) \times \left(\sum_{y_j:z_j=0}  \exp\left(\bfy^\top\bfS\bfy/2\right)/{c(\bfS)}   \right).
\end{equation}
Consequently, the inference of $\bfS$ does not depend on the unknown distribution 
 $q(\mathbf z \mid \bfy,\boldsymbol{\phi})$. 


As shown in \cite{liu2014stationary}, the MAR assumption, together with additional regularity conditions, ensures that the iterative imputation of the missing responses converges to the imputation distribution under a standard Bayesian procedure as the number of iterations and the sample size $N$ go to infinity. A key to this convergence result is the compatibility of the conditional models in the imputation step -- the logistic regression models are compatible with the Ising model as a joint distribution. 
The validity of the imputed samples further ensures the consistency of the estimated Ising parameter matrix. We summarize this result in Theorem~\ref{thm:params_consist}.

\begin{theorem}\label{thm:params_consist}
  Assume the following assumptions hold: 1) The Markov chain for missing data, generated by the iterative imputation algorithm Algorithm~\ref{alg:pseudo_imputation}, is positive Harris recurrent and thus admits a unique stationary distribution; 2) The missing data process is ignorable; 3) A regularity condition holds for prior distributions of Ising model parameters and auxiliary parameters, as detailed in the supplementary material. 
  Let $\pi_N^*(\bfS)$ be the posterior density of $\bfS$ implied by the stationary distribution of the proposed method. Given the true parameters $\bfS_0$ for the Ising model and any $\varepsilon >0,$ we have $\pi_N^*(\bfS)$ concentrates at $\bfS_0,$
  \begin{equation}
     \int_{B_\varepsilon(\bfS_0)}\pi_N^*(\bfS)d\bfS \rightarrow 1,
  \end{equation}
  in probability as $N\rightarrow\infty$. $B_\varepsilon(\bfS_0)=\{\bfS:\Vert\bfS-\bfS_0\Vert<\varepsilon\}$ is the open ball of radius $\varepsilon$ at $\bfS_0$.
\end{theorem}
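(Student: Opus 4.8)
The plan is to decompose the argument into two stages that mirror the two steps of Algorithm~\ref{alg:pseudo_imputation}. In the first stage I would characterize the stationary law of the imputation Markov chain, showing that in stationarity the imputed complete data are distributed --- asymptotically in $N$ --- according to a proper Bayesian posterior predictive distribution under the Ising joint model \eqref{eq:ising_model}. In the second stage I would view $\pi_N^*(\bfS)$ as the mixture, over such imputations, of the pseudo-posterior $p(\bfS\mid \bfy_1,\dots,\bfy_N)$ in \eqref{eq:s_cond}, and establish a posterior-concentration result for that pseudo-posterior at $\bfS_0$. Combining the two stages gives the claim.

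For the first stage, the starting point is the fact, already noted after \eqref{eq:ising_model_pseudo}, that the conditional distribution of $Y_j$ given $\YY_{-j}$ under the Ising model is exactly logistic with coefficient vector equal to the $j$th row of $\bfS$; hence the family of logistic working models used in the imputation step is compatible with the Ising model viewed as a joint distribution, in the sense required by \cite{liu2014stationary}. Assumption~1 (positive Harris recurrence) guarantees a unique stationary distribution for the chain on $(\bbb_1,\dots,\bbb_J,\{\bfy_{i,mis}\})$, and Assumption~2 (ignorability) removes the missing-data mechanism from the imputation exactly as in the factorization \eqref{eq:joint2}, so that only the Ising part of the likelihood is relevant. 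The next step is to let $N\to\infty$: a Bernstein--von Mises / standard Bayesian large-sample argument applied to each auxiliary pseudo-posterior \eqref{eq:post_beta} shows that $\bbb_j^{(t)}$ concentrates at the maximizer of the population conditional log-likelihood, which --- because the true conditionals are logistic --- is the $j$th row of $\bfS_0$; consequently the Bernoulli imputation kernels \eqref{eq:sample_y} converge to the true full conditionals of the Ising model, and the stationary imputation law converges to the true conditional law $P(\YY_{i,mis}\mid \YY_{i,obs},\bfS_0)$, independently across $i$. Since the stationary law is a fixed point of the imputation map, and, by identifiability of the Ising model through its one-dimensional conditionals, the truth is the only such fixed point, this limit is unambiguous; I would cite \cite{liu2014stationary} for the double-limit (iterations and $N$) bookkeeping and Assumption~3 for the prior regularity that result requires.

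For the second stage, note that conditional on a draw of $\bfS$ the imputed vectors $\bfy_i^{(t)}$ are independent across $i$, so in the large-$N$ limit from the first stage they behave like an i.i.d.\ sample from $P(\cdot\mid\bfS_0)$. I would then show $\pi_N^*(\bfS)$ concentrates by a Schwartz/Laplace-type argument: (i) a uniform law of large numbers gives $N^{-1}\sum_i\log p^*(\bfy_i^{(t)}\mid\bfS)\to \mathbb{E}_{\bfS_0}\log p^*(\YY\mid\bfS)$ uniformly on compact sets; (ii) this population pseudo-log-likelihood is uniquely maximized at $\bfS_0$ with a well-separated maximum, since the map from $\bfS$ to the collection of one-dimensional conditionals is injective; (iii) Assumption~3 supplies a prior putting positive mass on every neighborhood of $\bfS_0$ together with enough tail control to prevent the normalizing constant from being dominated by the complement of $B_\varepsilon(\bfS_0)$. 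These ingredients give $\int_{B_\varepsilon(\bfS_0)}p(\bfS\mid\bfy_1^{(t)},\dots,\bfy_N^{(t)})\,d\bfS\to 1$ in probability; the outer averaging over imputations is harmless because the posterior predictive from the first stage concentrates, and this yields the stated convergence for $\pi_N^*$ in Theorem~\ref{thm:params_consist}.

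I expect the main obstacle to be the first stage: rigorously proving that the full-conditional-specification chain, whose working conditionals use unconstrained auxiliary parameters $\bbb_j$ (and are therefore only asymptotically compatible with a symmetric $\bfS$), has a stationary distribution that converges to the correct Bayesian imputation distribution. This calls for a careful double-limit/fixed-point argument in the spirit of \cite{liu2014stationary}, together with control of the dependence that imputation introduces across observations. A secondary difficulty is that the $\bfS$-update in \eqref{eq:s_cond} is based on the pseudo-likelihood rather than the true likelihood, so off-the-shelf Bayesian consistency theorems do not apply verbatim; one must verify by hand that the population pseudo-likelihood separates $\bfS_0$ from the rest of the parameter space and that the pseudo-posterior inherits the concentration.
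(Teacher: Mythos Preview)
Your two-stage decomposition is exactly the paper's approach: Lemma~\ref{lemma:data_consist} (proved via the compatibility argument of \cite{liu2014stationary}, with an auxiliary exact-Gibbs chain as a bridge) gives the convergence of the stationary imputation law to the Bayesian posterior predictive, and then the pseudo-posterior concentration is obtained by writing $\int_{B_\varepsilon}\pi_N^*=\int_{B_\varepsilon}\pi_N+\int_{B_\varepsilon}(\pi_N^*-\pi_N)$, handling the first term by a generalized posterior concentration theorem of \cite{miller2021asymptotic} (which packages precisely your Schwartz/Laplace ingredients for the pseudo-likelihood) and the second by the total-variation bound from the first stage. The only cosmetic difference is that the paper invokes \cite{miller2021asymptotic} rather than verifying the separation and ULLN by hand, and it compares transition kernels to an exact Ising Gibbs sampler rather than arguing directly via Bernstein--von Mises for the $\bbb_j$, but the substance is the same.
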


We provide intuitions about this consistency result. Suppose that the data are generated by an Ising model. The iterative imputation method ensures that the parameters of the logistic regressions are close to those implied by the true Ising model, and thus, the conditional distributions we use to impute the missing values are close to those under the true model.
This further guarantees that the joint distribution of the imputed data given the observed ones is close to that under the true Ising model, and consequently, the Ising model parameters we learn from the imputed data are close to those of the true model.

\subsection{Computational Details}\label{subsec:comp}

In what follows, we discuss the specification of the prior distributions and the sampling of auxiliary parameters $\boldsymbol \beta_j$ and Ising model parameters $\bfS$.

\paragraph{Sampling $\boldsymbol{\beta}_j$.} 

We set independent mean-zero normal priors for entries of $\boldsymbol{\beta}_j$. For the intercept parameter $\beta_{jj}$, we use a weakly informative prior by setting the variance to 100. For the slope parameters $\beta_{jk}$, $k\neq j$, we set a more informative prior by setting the variance to be 1, given that these parameters correspond to the off-diagonal entries of $\bfS$, which are sparse and typically do not take extreme values. 
The sampling of the auxiliary parameters $\bbb_j$, following \eqref{eq:post_beta}, is essentially a standard Bayesian logistic regression problem. We achieve it by a Markov chain Monte Carlo (MCMC) sampler called the {P{\'o}lya}-Gamma sampler \citep[]{polson2013bayesian}. 

To obtain $\boldsymbol{\beta}^{(t)}_j$, this sampler starts with $\boldsymbol{\beta}^{(t-1)}_j$ from the previous step. It constructs an MCMC transition kernel by a data argumentation trick. More precisely, the following two steps are performed.
\begin{enumerate}
    \item Given $\boldsymbol{\beta}^{(t-1)}$, independently sample $N$ augmentation variables, each from a  {P{\'o}lya}-Gamma distribution \citep[]{barndorff1982normal}.
    \item Given the $N$ augmentation variables, sample $\boldsymbol{\beta}^{(t)}$ from a $J$-variate normal distribution. 
\end{enumerate}
The details of these two steps are given in the supplementary material, including the forms of the {P{\'o}lya}-Gamma distributions and the mean and covariance matrix of the $J$-variate normal distribution. We choose the {P{\'o}lya}-Gamma sampler because it is very easy to construct and computationally efficient. 
It is much easier to implement than Metropolis-Hastings samplers which  often need tuning to achieve good performance.

We comment on the computational complexity of the sampling of $\boldsymbol{\beta}_j$. The sampling from the {P{\'o}lya}-Gamma distribution has a complexity $O(NJ)$, and the sampling from the $J$-variate normal distribution has a complexity of $O(NJ^2)+O(J^3)$. Consequently, a loop of all the  $\bbb_j, j=1, ..., J$, has a complexity of 
$O((N+J)J^3)$.


\paragraph{Sampling $\bfS$.} Since $\bfS$ is a symmetric matrix, we reparametrize it by vectorizing its off-diagonal entries (including the diagonal entries). Specifically, the reparameterization is done by half-vectorization, denoted by $\bfalpha = \text{vech}(\bfS) = (s_{11}, ..., s_{J1}, s_{22}, ..., s_{J2}, ..., s_{JJ})^\top \in \mathbb R^{J(J+1)/2}$. It is easy to see that $\text{vech}(\cdot)$ is a one-to-one mapping between $\mathbb R^{J(J+1)/2}$ and $J\times J$ symmetric matrices. Therefore, we impose a prior distribution on $\bfalpha$ and sample $\bfalpha^{(t)}$ in the $t$th iteration when $\bfS$ is sampled. Then we let $\bfS^{(t)} = \text{vech}^{-1}(\bfalpha^{(t)})$.

Recall that a thinning step is performed, and $t_0$ is the gap between two samples of $\bfS$. Let $t$ be a multiple of $t_0$ and $\bfalpha^{(t-t_0)} = \text{vech}(\bfS^{(t-t_0)})$ be previous value of $\bfalpha$. The sampling of $\bfalpha^{(t)}$ is also achieved by a {P{\'o}lya}-Gamma sampler, which involves the following two steps similar to the sampling of $\boldsymbol{\beta}_j$. 
\begin{enumerate}
    \item Given $\bfalpha^{(t-t_0)}$, independently sample $NJ$ augmentation variables, each from a {P{\'o}lya}-Gamma distribution. 
    \item Given the $NJ$ augmentation variables, sample $\bfalpha^{(t)}$ from a $J(J+1)/2$-variate normal distribution. 
\end{enumerate}
The details of these two steps are given in the supplementary material. We note that the computational complexity of sampling the $NJ$ augmentation variables is $O(NJ^2)$, and that of sampling $\bfalpha^{(t)}$ is $O(NJ^5)+O(J^6)$, resulting in an overall complexity $O((N+J)J^5)$. Comparing the complexities of the imputation and sampling $\bfS$ steps, we notice that the latter is computationally much more intensive.
This is the reason why we choose to impute data by introducing auxiliary parameters $\boldsymbol{\beta_j}$s rather than using Ising network parameters $\bfS$ so that the iterative imputation mixes much faster in terms of the computation time. In addition, we only sample  $\bfS$ every $t_0$ iterations for a reasonably large $t_0$ to avoid a high computational cost and also reduce the auto-correlation between the imputed data.

We remark that \cite{marsman2022objective} considered a similar Ising network analysis problem based on fully observed data, in which they proposed a Bayesian inference approach based on a spike-and-slab prior to learning $\bfS$. Their Bayesian inference is also based on a {P{\'o}lya}-Gamma sampler. However, they combined Gibbs sampling with a 
{P{\'o}lya}-Gamma sampler, updating one parameter in $\bfS$ at a time. This Gibbs scheme often mixes slower than the joint update of $\bfS$ as in the proposed method and, thus, is computationally less efficient. The proposed {P{\'o}lya}-Gamma sampler may be integrated into the framework of \cite{marsman2022objective} to improve their computational efficiency.

\section{Numerical Experiments} 

We illustrate the proposed method and show its power via simulation studies and a real-world data application. In Section 3.1, we conduct two simulation studies, evaluating the proposed method under two MAR scenarios, one of which involves missingness due to screening items. A further simulation study is undertaken, applying our method to a 15-node Ising model governed by the MCAR mechanism. Detailed exposition of this study can be found in the supplementary materials.

\subsection{Simulation}\label{sec:6node} 

\paragraph{Study I} 
We generate data from an Ising model with $J=6$ variables. Missing values are generated under an MAR setting that is not MCAR. The proposed method is then compared with Bayesian inference based on (1) listwise deletion and (2) a single imputation, where the single imputation is based on the imputed data from the $T$th iteration of Algorithm~\ref{alg:pseudo_imputation}, recalling that $T_0$ is the burn-in size.

We configure the true parameter matrix $\bfS_0$ as follows. Since $\bfS_0$ is a symmetric matrix, we only need to specify its upper triangular matrix and then the diagonal entries. For the upper triangular entries (i.e., $s_{jl}$, $j<l$), 
we randomly assign 50\% of them to zero to introduce a moderately sparse setting. In addition, the nonzero parameters are then generated by 
sampling from a uniform distribution over the set $[-1, -0.4] \cup [0.4, 1]$. The intercept parameters $s_{jj},j=1,\ldots, J$ are set to zero. 
The true parameter values are given in 
the supplementary material. 
Missing data are simulated by masking particular elements under an MAR mechanism.  In particular, we have $z_{i6}=1$, so that the sixth variable is always observed. We further allow the missingness probabilities of the first five variables (i.e., $z_{ij}=0,j=1,\ldots,5$) to depend on the values of $y_{i6}$.
The specific settings on $p(z_{ij}=0\mid y_{i6}),j=1,\ldots, 5$ are detailed in the supplementary material.
Data are generated following the aforementioned Ising model and MAR mechanism for four different sample sizes, $N = 1,000, 2,000, 4,000$, and $8,000$, respectively. 
For each sample size,  50 independent replications are created.

Three methods are compared -- the proposed method, Bayesian inference with a single imputation, and Bayesian inference based on complete cases from listwise deletion. 
The Bayesian inference for complete data is performed by sampling parameters from the posterior implied by the pseudo-likelihood and a normal prior, which is a special case of the proposed method without iterative imputation steps.
All these methods shared the same initial values $s_{jl}^{(0)}\sim U(-0.1,0.1),1\leq j\leq l\leq J$.
For our proposed method, we set the length of the Markov Chain Monte Carlo (MCMC) iterations to $T = 5,000$ and a burn-in size of $T_0 = 1,000$, with a thinning parameter $k_0=10$. This setup leads to an effective total of 400 MCMC samples for the Ising parameter matrix $\bfS$. Notably,  identical MCMC length and burn-in configuration are applied  during parameters inference in the single imputation and complete-case analyses.

Figure~\ref{fig:3node} gives the plots for the mean squared errors (MSE) of the estimated edge parameters and intercept parameters under different sample sizes and for different methods. 
The MSE for each parameter $s_{jl}$ is defined as 
\begin{equation}
    \frac{1}{50} \sum_{k=1}^{50}(\hat{s}_{k,jl}-s_{0,jl})^2.
\end{equation}
Here, $\hat{s}_{k,jl}$ denotes the estimated value from the $k$th replication while $s_{0,jl}$ refers to the true value.
Each box in panel (a) corresponds to the 15 edge parameters, and each box in panel (b) corresponds to the 6 intercept parameters. 
We notice that the listwise deletion procedure introduces biases into the edge and intercept estimation, resulting in the MSEs for certain parameters not decaying toward zero as the sample size grows. Additionally, both the proposed method and the single imputation method offer accurate parameter estimation, with MSEs decaying toward zero as the sample size increases. Notably, the proposed method is substantially more accurate than the single imputation method, suggesting that aggregating over multiple imputed datasets improves the estimation accuracy. Furthermore, 
for smaller sample sizes, the complete-case analysis 
seems to yield slightly more accurate estimates of the edge parameters than the single imputation method. Across four sample sizes, the median computational times for obtaining the results of the proposed method were 33, 50, 88, and 185 seconds, respectively\footnote{All simulations were performed on Intel-based systems with the following configuration: Ubuntu 22.04.3 LTS operating system; Intel(R) Xeon(R) Platinum 8375C CPU @ 2.90GHz; Python version 3.11.2; Numpy version 1.23.5.}.

\begin{figure}[ht]
\centering
\begin{subfigure}{.5\textwidth}
  \centering
  \includegraphics[width=.99\linewidth]{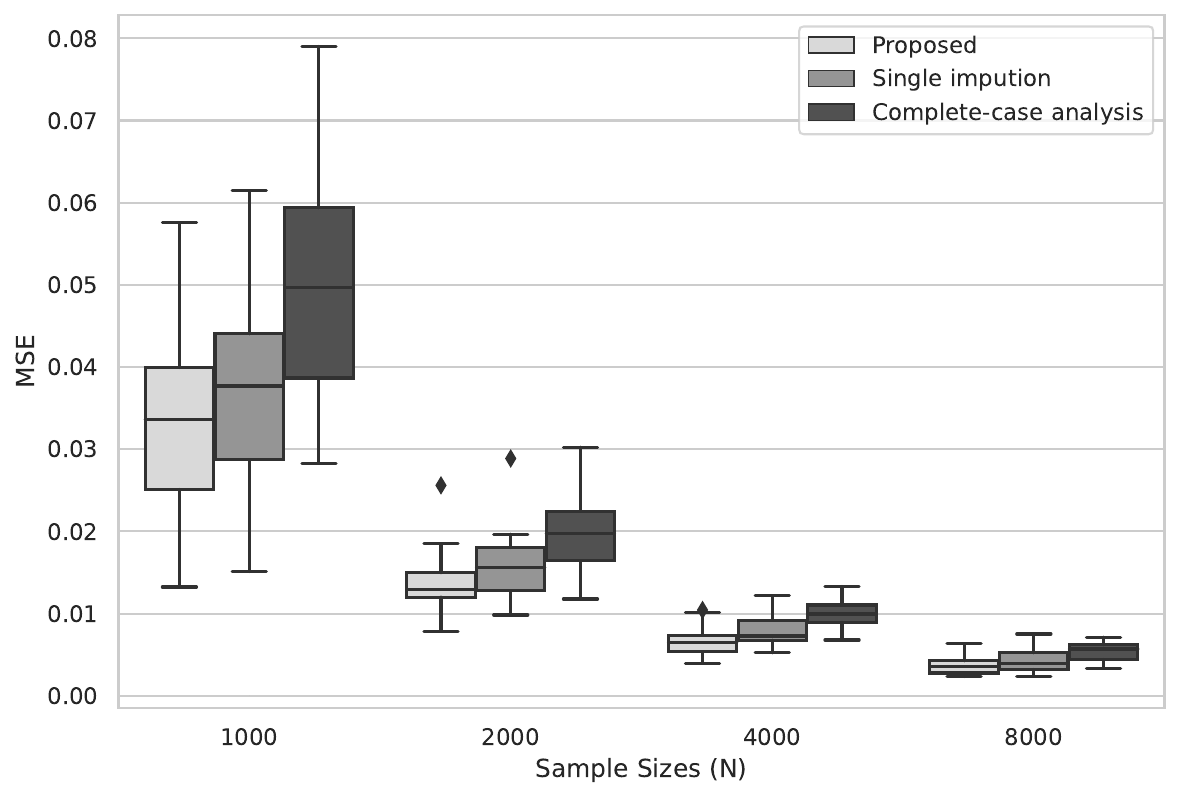}
  \caption{ }
  \label{fig:simu1_edges}
\end{subfigure}%
\begin{subfigure}{.5\textwidth}
  \centering
  \includegraphics[width=.99\linewidth]{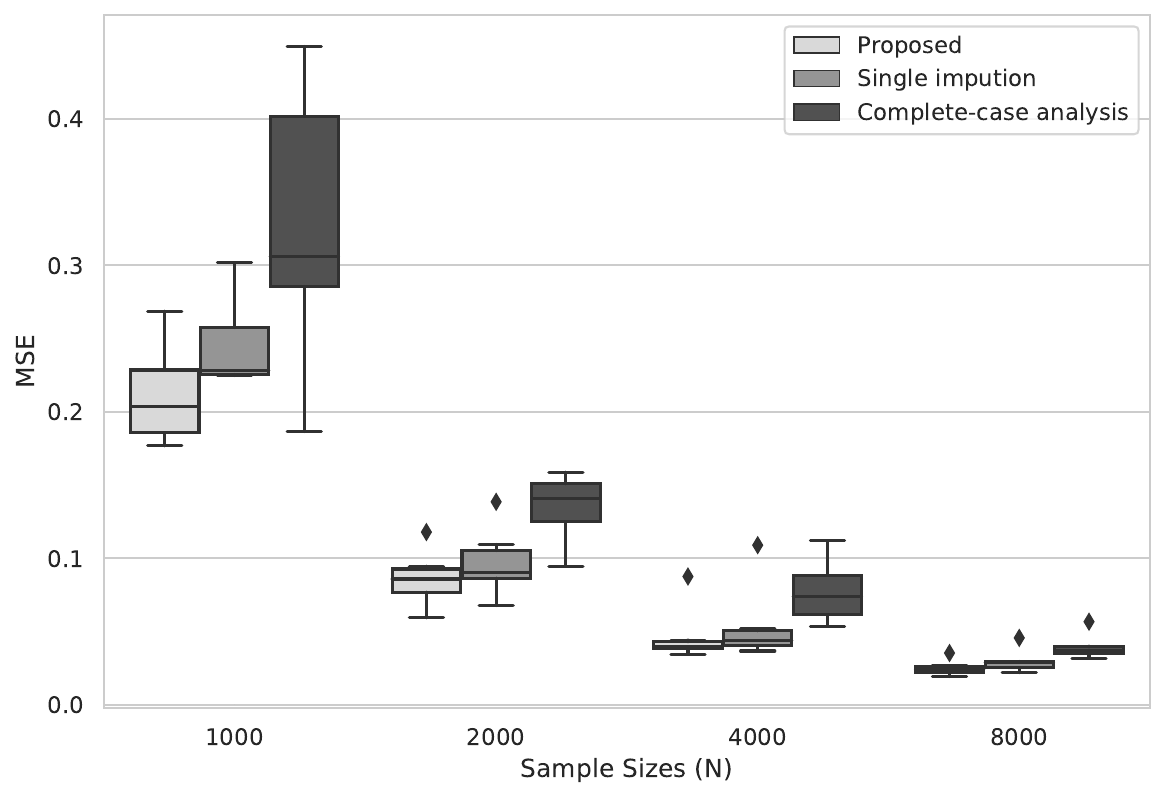}
  \caption{ }
  \label{fig:simu1_intercept}
\end{subfigure}
\caption{Panel (a): Boxplots of MSEs for edge parameters $s_{jl}$. Panel (b): Boxplots of MSEs for intercept parameters $s_{jj}$. }
\label{fig:3node}
\end{figure}

\paragraph*{Study II: Missing due to screening items}

Missingness due to screening items is commonly encountered in practice, posing challenges to the network analysis \citep[]{borsboom2017false,mcbride2023quantifying}. This occurs, for example, in surveys where initial screening questions determine respondents' eligibility or relevance for subsequent questions. 
Suppose respondents indicate a lack of relevant experience (i.e., their answers to the screening items are all negative). In that case, they are not prompted to answer certain follow-up questions, making the missingness of these responses depend on their answers to the screening questions and, thus, MAR. Our real data example in Section~\ref{sec:real_data_app} involves two screening items, which results in a large proportion of missing data. 

We consider a simulation setting involving two screening items to evaluate the proposed method's performance under this setting. 
Similar to Study I, we consider a setting with six items, the first two of which are the screening items. The full data are generated under an Ising model, whose parameters are given in the supplementary material, where the corresponding network has six positive edges including one between the two screening items.
The responses to the screening items are always set as observed for any observation. 
When an observation's responses to the screening items are both zero, their responses to the rest of the four items are regarded as missing. 

We consider a single sample size $N=8,000$ and generate 50 independent datasets. We apply the proposed method, the single imputation method, and the complete-case analysis. For each estimation procedure, we set the MCMC iterations $T = 5,000$,  the burn-in size $T_0 = 1,000$, and the thinning parameter $k_0=10$. These methods are compared in terms of MSEs and biases for parameter estimation. 
 
Table~\ref{tab:simu2} presents the result. For all the edge parameters except for $s_{12}$, the three estimation methods work well, though the single imputation method is slightly less accurate, as indicated by its slightly larger MSEs. However, the complete-case estimate is substantially negatively biased for $s_{12}$, the edge between two screening items. At the same time, the imputation-based methods are still accurate, with the proposed method having a smaller MSE than that of the single imputation method. This result confirms that running a complete-case analysis on data involving screening items is problematic while performing the imputation-based methods, especially the proposed method, yields valid results. 

We provide discussions on this result. The negative bias for $s_{12}$ in the complete-case analysis is due to a selection bias, typically referred to as Berkson's paradox \citep[]{de2021psychological}. The complete-case analysis excludes all the response vectors with negative responses to both screening items. Consequently, a positive response on one screening item strongly suggests a negative response on the other, regardless of the responses to the rest of the items. This results in a falsely negative conditional association between the two screening items. 
In fact, one can theoretically show that the frequentist estimate of $s_{12}$ based on the maximum pseudo-likelihood is negative infinity. The finite parameter estimate in Table~\ref{tab:simu2} for $s_{12}$ is due to the shrinkage effect of the prior distribution that we impose.  On the other hand, the proposed method makes use of the observed frequency of the (0,0) response pattern for the two screening items, in addition to the frequencies of the fully observed response vectors. As shown by the identifiability result in the Appendix, these frequencies are sufficient for identifying all the parameters of the Ising model.

\begin{table}[H]
\centering
\begin{tabular}{@{}cccc@{}}
\toprule
Edge & Proposed & Single Imputation & Complete-case Analysis \\ \midrule
& MSE $|$ Bias& MSE $|$ Bias& MSE $|$ Bias\\
\hline
$s_{12}$  & 0.007 $|$ \phantom{-}0.029 & 0.007 $|$ \phantom{-}0.032 & 57.060 $|$ -7.524\\
$s_{13}$  & 0.012 $|$ \phantom{-}0.030 & 0.019 $|$ \phantom{-}0.020 & 0.012 $|$ \phantom{-}0.030\\
$s_{14}$  & 0.010 $|$ \phantom{-}0.008 & 0.011 $|$ -0.002 & 0.010 $|$ \phantom{-}0.011\\
$s_{15}$  & 0.014 $|$ -0.007 & 0.020 $|$ -0.002 & 0.014 $|$ -0.005\\
$s_{16}$  & 0.011 $|$ -0.020 & 0.017 $|$ -0.027 & 0.011 $|$ -0.017\\
$s_{23}$  & 0.003 $|$ \phantom{-}0.012 & 0.004 $|$ \phantom{-}0.009 & 0.004 $|$ \phantom{-}0.012\\
$s_{24}$  & 0.005 $|$ \phantom{-}0.004 & 0.005 $|$ \phantom{-}0.001 & 0.005 $|$ \phantom{-}0.004\\
$s_{25}$  & 0.007 $|$ \phantom{-}0.011 & 0.009 $|$ \phantom{-}0.013 & 0.007 $|$ \phantom{-}0.012\\
$s_{26}$  & 0.005 $|$ -0.005 & 0.006 $|$ -0.008 & 0.005 $|$ -0.003\\
$s_{34}$  & 0.004 $|$ -0.004 & 0.004 $|$ -0.003 & 0.004 $|$ -0.003\\
$s_{35}$  & 0.006 $|$ \phantom{-}0.014 & 0.007 $|$ \phantom{-}0.017 & 0.006 $|$ \phantom{-}0.014\\
$s_{36}$  & 0.008 $|$ -0.009 & 0.007 $|$ -0.017 & 0.008 $|$ -0.009\\
$s_{45}$  & 0.006 $|$ -0.004 & 0.006 $|$ -0.003 & 0.006 $|$ -0.004\\
$s_{46}$  & 0.006 $|$ -0.002 & 0.006 $|$ -0.003 & 0.006 $|$ -0.003\\
$s_{56}$  & 0.007 $|$ \phantom{-}0.001 & 0.008 $|$ -0.002 & 0.007 $|$ \phantom{-}0.002\\
\bottomrule
\end{tabular}
\caption{MSEs and biases for edge parameters.}
\label{tab:simu2}
\end{table}

\subsection{A Real Data Application}\label{sec:real_data_app}

We analyze the dataset for the 2001-2002 National Epidemiological Survey of Alcohol and Related Conditions (NESARC), which offers valuable insights into alcohol consumption and associated issues in the U.S. population \citep[]{grant2003source}. The dataset consists of 43,093 civilian non-institutionalized individuals aged 18 and older. In this analysis, we focus on two specific sections of the survey 
that concern two highly prevalent mental health disorders -- Major Depressive Disorder (MDD) and Generalized Anxiety Disorder (GAD).  Because MDD and GAD have high symptom overlap \citep[]{hettema2008nosologic} 
and often co-occur \citep[]{hasin2005epidemiology},
it is important to perform a joint analysis of the symptoms of the two mental health disorders and study their separation. In particular, \cite{blanco2014latent} performed factor analysis based on the same data and found that the two mental health disorders have  distinct latent structures. We reanalyze the data, hoping to gain some insights from the network perspective of the two mental health disorders. 

Following \cite{blanco2014latent}, we consider data with nine items measuring MDD and six items measuring GAD. These items are designed according to the Diagnostic and Statistical Manual of Mental Disorders, Fourth Edition (DSM-IV) \citep[]{APA2000}.
These items ask the participants if they have recently
experienced certain symptoms; see  
Table~\ref{tab:real_data_items} for their short descriptions. 
After eliminating samples with entirely absent values across the 15 items, a total of 42,230 cases remain in the dataset. Note that any ``Unknown'' responses in the original data are converted into missing values. 
The dataset exhibits a significant degree of missingness, with only 2,412 complete cases for the 15 items, representing approximately 6\% of the total cases. Specifically, the missing rate for each item is given in Table~\ref{tab:real_data_items}. 
Importantly, items D1 and D2 function as screening items and, thus, have a very low missing rate. The respondents did not need to answer items D3-D9 if the responses to D1 and D2 were ``No'' or ``Unknown'', resulting in high missing rates for these items.  This pattern suggests that the missing data in this study is not MCAR. The GAD items A1-A6 also have a screening item, which results in the high missing rates in A1 through A6. Following the treatment in \cite{blanco2014latent}, these screening items are not included in the current analysis. 

\begin{table}[]
\centering
\begin{tabular}{ll}
\hline
\multicolumn{2}{c}{\textbf{MDD Item Description}} \\
\hline
D1 (0.1\%): Depressed mood & D5 (68.5\%): Psychomotor agitation/retardation \\
D2 (0.2\%): Diminished interest & D6 (68.0\%): Fatigue/loss of energy \\
D3 (68.5\%): Weight loss or gain & D7 (67.9\%): Feelings of guilt \\
D4 (67.9\%): Insomnia or hypersomnia & D8 (67.9\%): Diminished concentration \\
 & D9 (67.7\%): Recurrent thoughts of death \\
\hline
\multicolumn{2}{c}{\textbf{GAD Item Description}} \\
\hline
A1 (91.8\%): Restlessness & A4 (91.8\%): Irritability \\
A2 (91.9\%): Easily fatigued & A5 (91.9\%): Muscle tension \\
A3 (91.8\%): Difficulty concentrating & A6 (91.8\%): Sleep disturbance \\
\hline
\end{tabular}
\caption{Descriptions of MDD and GAD items and their missing rates.}
\label{tab:real_data_items}
\end{table}

We apply the proposed method and the complete-case analysis to the data. For each method,  10 MCMC chains with random starting values are used, each having 10,000 MCMC iterations and a burn-in size 5,000. The Gelman-Rubin statistics are always below 1.018, confirming the satisfactory convergence of all 120 parameters for both methods. The estimated network structures for MDD and GAD items are presented in Figure~\ref{fig:realdata}, 
where an edge is shown between two variables when the absolute value of the estimated parameter is greater than 0.5. We emphasize that this threshold is applied only for visualization purposes, rather than for edge selection. Consequently, the edges in Figure \ref{fig:realdata} should only be interpreted as edges with large estimated parameters, rather than truly nonzero edges. The nine MDD items are shown as blue nodes at the bottom, and the six GAD items are shown as orange nodes at the top. 
The edges are colored blue and orange, which represent
positive and negative parameter estimates, respectively.
In addition, the line thickness of the edges indicates their magnitude. A clear difference between the two methods is the edge between D1 ``depressed mood most of the day, nearly every day," and D2 ``markedly diminished interest or pleasure in all, or almost all, activities most of the day, nearly every day", which are two screening questions in the survey that all the participants responded to. The estimated parameter for this edge has a large absolute value under each of the two methods, but the estimated parameter is negative in the complete-case analysis, while it is positive according to the proposed method. As revealed by the result of Study II in Section~\ref{sec:6node}, the negative edge estimate of the edge between the screening items given by the complete case analysis is spurious. Considering the content of these items, we believe that the estimate from the proposed method is more sensible. Other than this edge, the remaining structure of the two networks tends to be similar, but with some differences. In particular, we see that the complete-case analysis yields more edges than the proposed method; for example, the edges of A4-A5, A1-D5, D1-D6, D1-D7, D1-D8, and D8-D9 appear in the estimated network from the complete-case analysis but not in that of the proposed method, while only two edges, A3-A5 and D3-D4, are present in the network estimated by the proposed method but absent in the network from the complete-case analysis. We believe this is due to the higher estimation variance of the complete-case analysis caused by its relatively small sample size.

Finally, our analysis shows that the symptoms of each mental health disorder tend to densely connect with each other in the Ising network, while the symptoms are only loosely but positively connected between the two mental health disorders. The edges between the two mental health disorders identify the overlapping symptoms, including  ``D4: Insomnia or hypersomnia"
and ``A6: Sleep disturbance", ``A2: Easily fatigued" and ``D6: Fatigue/loss of energy", and ``A3: Difficulty concentrating" and ``D8: Diminished concentration". These results suggest that MDD and GAD are two well-separated mental health disorders, despite their high symptom overlap and frequent co-occurrence. This result confirms the conclusion of \cite{blanco2014latent} that GAD and MDD are closely related but different nosological entities. 

\begin{figure}
\centering
\begin{subfigure}{.5\textwidth}
  \centering
  \includegraphics[width=.7\linewidth]{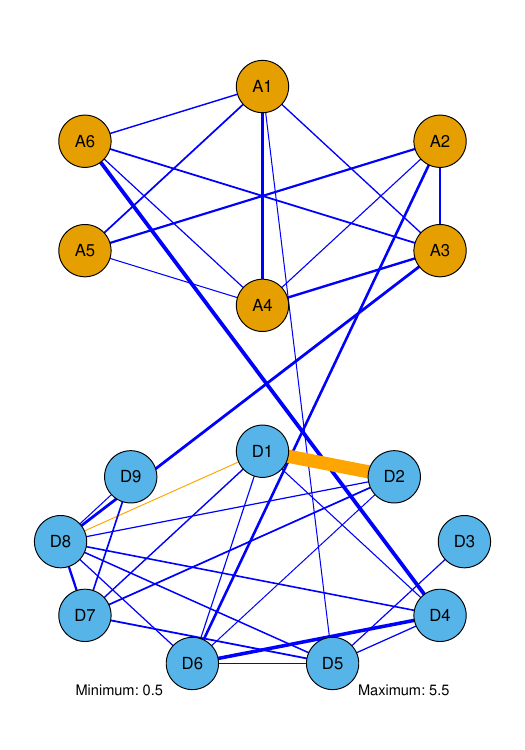}
  \caption{ }
  \label{fig:realdata_complete_analysis}
\end{subfigure}%
\begin{subfigure}{.5\textwidth}
  \centering
  \includegraphics[width=.7\linewidth]{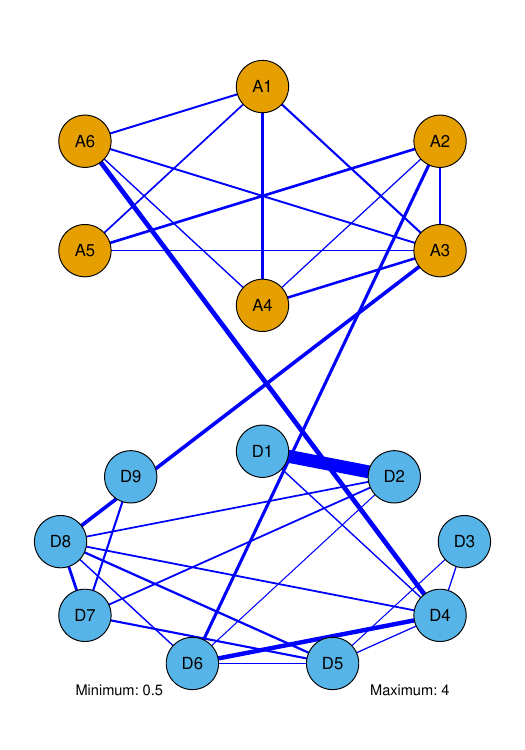}
  \caption{ }
  \label{fig:realdata_proposed}
\end{subfigure}
\caption{Estimated network structure for MDD and GAD. Panel (a): Complete-case analysis. Panel (b) Proposed method.}
\label{fig:realdata}
\end{figure}

\section{Concluding Remarks}\label{sec:conclusion}

In this paper, we propose a new method for Ising network analysis in the presence of missing data. The proposed method integrates iterative imputation into a Bayesian inference procedure based on conditional likelihood. An asymptotic theory is established that guarantees the consistency of the proposed estimator. Furthermore, a {P{\'o}lya}-Gamma machinery is proposed for the sampling of Ising model parameters, which yields efficient computation. 
The power of the proposed method is further shown via simulations and a real-data application. An R package has been developed that will be made publicly available upon the acceptance of the paper. 


The current work has several limitations that require future theoretical and methodological developments. 
First, this manuscript concentrates on parameter estimation for the Ising model in the presence of missing data. However, the problem of edge selection \citep{rovckova2018bayesian,noghrehchi2021selecting,borsboom2022possible,marsman2022objective} requires future investigation. There are several possible directions. One direction is to view it as a multiple-testing problem and develop procedures that control certain familywise error rates or the false discovery rate for the selection of edges. To do so, one needs to develop a way to quantify the uncertainty for the proposed estimator. It is nontrivial, as the proposed method is not a standard Bayesian procedure, and we still lack a theoretical understanding of the asymptotic distribution of the proposed procedure. In particular, it is unclear whether the Bernstein-von-Mises theorem that connects Bayesian and frequentist estimation holds under the current setting. Another direction is to view it as a model selection problem. In this direction, we can use sparsity-inducing priors to better explore the Ising network structure when it is sparse. We believe that the proposed method, including the iterative imputation and the  {P{\'o}lya}-Gamma machinery, can be adapted when we replace the normal prior with the spike-and-slab prior considered in \cite{marsman2022objective}. This adaptation can be done by adding some Gibbs sampling steps. In addition, it is of interest to develop an information criterion that is computationally efficient while statistically consistent. This may be achieved by computing an information criterion, such as the Bayesian information criterion, for each imputed dataset and then aggregating them across multiple imputations. Finally, the proposed method has several variants that may be useful for problems of different scales. For problems of a relatively small scale (i.e., when $J$ is small), we may perform data imputation using the sampled $\mathbf S$ instead of using auxiliary parameters $\bbb_j$s. This choice will make the algorithm computationally more intensive, as the sampling of $\mathbf S$ has a high computational complexity. On the other hand, it may make the estimator statistically more efficient as it avoids estimating the auxiliary parameters $\bbb_j$s, whose dimension is higher than $\mathbf S$. For very large-scale problems, one may estimate the Ising model parameters based only on the auxiliary parameters $\bbb_j$s. For example, we may estimate $s_{ij}$ by averaging the value of $(\beta_{ij}+\beta_{ji})/2$ over the iterations. This estimator is computationally more efficient than the proposed one, as it avoids sampling $\mathbf S$ given the imputed datasets. This estimator should still be consistent but may be statistically slightly less efficient than the proposed one.

\section*{Acknowledgement}\label{sec:acknoledgement}
The research was supported in part by the Shanghai Science and Technology Committee Rising-Star Program (22YF1411100).

\clearpage

\setcounter{equation}{0}
\renewcommand{\theequation}{S\arabic{equation}}
\setcounter{theorem}{0}
\renewcommand{\thetheorem}{S\arabic{theorem}}
\setcounter{definition}{0}
\renewcommand{\thedefinition}{S\arabic{definition}}

\appendix
\section*{Appendix}
\section{Identifiability of Ising Model with Two Screening Items}

We investigate the identifiability of the Ising model parameters when there are two screening items.  
\begin{proposition}
Consider an Ising model with $J \geq 3$, true parameters $\bfS_0 = (s_{ij}^0)_{J\times J}$, and the first two items being the screening items. We define $p_0 (\mathbf y) := P(\mathbf Y = \mathbf y|\bfS_0), $ for any $\mathbf y \in \mathcal A = \{(x_1, ..., x_J)^\top \in \{0,1\}^J:x_1 = 1 \mbox{~or~} x_2=1 \}$, and $p_0(0,0) := P(Y_1 = 0, Y_2 =0|\bfS_0)$ under the Ising model. Then there does not exist an Ising parameter matrix $\bfS \neq \bfS_0$ such that  $p_0 (\mathbf y) = P(\mathbf Y = \mathbf y|\bfS), $ for any $\mathbf y \in \mathcal A$ and $p_0(0,0) = P(Y_1 = 0, Y_2 =0|\bfS)$ under the Ising model.

\end{proposition}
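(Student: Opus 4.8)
The plan is to count parameters versus constraints and show the system of equations determining $\bfS$ from the given probabilities has a unique solution. The Ising model on $J$ items has $J(J+1)/2$ free parameters (the entries of $\text{vech}(\bfS)$). The data we are given are the probabilities $p_0(\mathbf y)$ for all $\mathbf y \in \mathcal A$, which has $2^J - 2^{J-2}$ elements, together with the single number $p_0(0,0)$; note that because these are probabilities summing to $1$, knowing all $p_0(\mathbf y)$ for $\mathbf y \in \mathcal A$ already determines $p_0(0,0) = 1 - \sum_{\mathbf y \in \mathcal A} p_0(\mathbf y)$, so the extra datum $p_0(0,0)$ is redundant given $\mathcal A$; what it really buys us is that we get to use the \emph{unnormalized} weights on $\mathcal A$ up to a common constant, plus the total mass on the complement $\{(0,0,x_3,\dots,x_J)\}$. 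The key realization is that within $\mathcal A$ the ratios $p_0(\mathbf y)/p_0(\mathbf y')$ equal the ratios of unnormalized Ising weights $\exp(\tfrac12 \mathbf y^\top \bfS \mathbf y)/\exp(\tfrac12 \mathbf y'^\top \bfS \mathbf y')$, which do not involve the normalizing constant $c(\bfS)$.

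First I would extract all diagonal and off-diagonal parameters from ratios of configurations inside $\mathcal A$. Fix a reference configuration in $\mathcal A$, say $\mathbf e_1 = (1,0,\dots,0)$. For each $j \geq 3$, the configuration $\mathbf e_1 + \mathbf e_j \in \mathcal A$ and the ratio $p_0(\mathbf e_1 + \mathbf e_j)/p_0(\mathbf e_1)$ equals $\exp(s_{jj}/2 + s_{1j})$. Also $p_0(\mathbf e_2)/p_0(\mathbf e_1) = \exp((s_{22}-s_{11})/2)$ and $p_0(\mathbf e_1 + \mathbf e_2)/p_0(\mathbf e_1) = \exp(s_{22}/2 + s_{12})$. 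More generally, taking the "discrete second difference" $\log p_0(\mathbf y + \mathbf e_j + \mathbf e_k) - \log p_0(\mathbf y + \mathbf e_j) - \log p_0(\mathbf y + \mathbf e_k) + \log p_0(\mathbf y) = s_{jk}$ whenever all four configurations lie in $\mathcal A$ — and since $\mathbf y$ can always be chosen to have $y_1 = 1$, all four always lie in $\mathcal A$. This recovers every off-diagonal $s_{jk}$ with $\{j,k\} \neq \{1,2\}$ and, via single differences anchored at $\mathbf e_1$, every $s_{jj}$ for $j \geq 2$ and $s_{1j}$ for $j \geq 3$; relative to $s_{11}$, everything except $s_{11}$ and $s_{12}$ is pinned down. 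So the only ambiguity after using $\mathcal A$-internal ratios is a two-parameter family in $(s_{11}, s_{12})$ — but actually $s_{11}$ only ever appears inside $\mathcal A$ in combinations already determined (e.g. $s_{22} - s_{11}$ gives $s_{11}$ from the known $s_{22}$), so in fact only $s_{12}$ remains genuinely free from $\mathcal A$-ratio data alone. Wait — I should double-check: $p_0(\mathbf e_1)$ itself relative to the others on $\mathcal A$ is fixed by normalization on $\mathcal A \cup \{(0,0,\cdot)\}$... so let me be careful: the $\mathcal A$-internal ratios determine $\bfS$ up to the single parameter $s_{12}$ (since every configuration in $\mathcal A$ has $y_1 = 1$ or $y_2 = 1$ but a configuration has the product $y_1 y_2$ nonzero only when $y_1 = y_2 = 1$, and then $s_{12}$ enters — so $s_{12}$ \emph{is} recoverable from configurations with $y_1 = y_2 = 1$ versus those without, all inside $\mathcal A$). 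Let me recount: $\log p_0(\mathbf e_1 + \mathbf e_2) - \log p_0(\mathbf e_1) - \log p_0(\mathbf e_2) + \log p_0(\mathbf 0_{\text{would-be}})$ — but $\mathbf 0 \notin \mathcal A$. So the clean second-difference for $s_{12}$ needs $\mathbf 0$, which is unavailable; that is exactly why $s_{12}$ is the delicate one. However, $s_{12}$ \emph{does} appear linearly in $\log p_0(\mathbf y)$ for every $\mathbf y \in \mathcal A$ with $y_1 = y_2 = 1$, so a second difference like $\log p_0(\mathbf e_1 + \mathbf e_2 + \mathbf e_3) - \log p_0(\mathbf e_1 + \mathbf e_3) - \log p_0(\mathbf e_2) + \log p_0(\mathbf e_1)$... this isn't a clean cancellation either. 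So I would isolate: from $\mathcal A$-internal data, all of $\bfS$ except $s_{12}$ is determined (with $s_{11}$ recovered from $s_{22}$), and $s_{12}$ is determined up to needing one more equation.

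The last step is to use $p_0(0,0)$ to pin down $s_{12}$. Write $P(Y_1 = Y_2 = 0 \mid \bfS) = Z_{00}(\bfS)/c(\bfS)$ where $Z_{00}(\bfS) = \sum_{x_3,\dots,x_J} \exp(\tfrac12 (0,0,x_3,\dots,x_J)\bfS(0,0,x_3,\dots,x_J)^\top)$ depends only on $(s_{jj})_{j\geq 3}$ and $(s_{jk})_{3 \leq j < k}$ — all already determined — and $c(\bfS) = Z_{00}(\bfS) + \sum_{\mathbf y \in \mathcal A} \exp(\tfrac12 \mathbf y^\top \bfS \mathbf y)$. Having fixed all parameters except $s_{12}$, the map $s_{12} \mapsto P(Y_1 = Y_2 = 0 \mid \bfS)$ equals $Z_{00}/(Z_{00} + A_0 + A_1 e^{s_{12}})$ where $A_0 = \sum_{\mathbf y \in \mathcal A, y_1 y_2 = 0} \exp(\cdots)$ and $A_1 = \sum_{\mathbf y \in \mathcal A, y_1 y_2 = 1}\exp(\cdots - s_{12} \cdot 1)$ are positive constants (independent of $s_{12}$), so this function is strictly monotone (strictly decreasing) in $s_{12}$ and hence injective. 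Therefore $p_0(0,0)$ determines $s_{12}$ uniquely, and combined with the previous step $\bfS = \bfS_0$. I expect the main obstacle to be the bookkeeping in the second paragraph: carefully verifying that the second-difference identities recover \emph{all} parameters except $s_{12}$ while staying inside $\mathcal A$, i.e. that for each parameter one can exhibit four (or two) configurations in $\mathcal A$ whose alternating log-probability sum isolates it — this uses $J \geq 3$ critically (one needs a "spectator" coordinate $j \geq 3$ to form $\mathbf y$ with $y_1 = 1$). The monotonicity argument in the last step is then routine.
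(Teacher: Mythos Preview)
Your strategy matches the paper's --- recover most parameters from log-ratios inside $\mathcal A$, then use $p_0(0,0)$ with a monotonicity argument for the remaining degree of freedom --- but there is a genuine bookkeeping error. You claim that $\mathcal A$-internal ratios determine ``all of $\bfS$ except $s_{12}$ (with $s_{11}$ recovered from $s_{22}$),'' but $s_{22}$ is \emph{not} recoverable from these ratios: the single difference anchored at $\mathbf e_1$ gives only $s_{22}/2+s_{12}$, and comparing $\mathbf e_2$ to $\mathbf e_1$ gives only $(s_{22}-s_{11})/2$. In fact every $\mathbf y\in\mathcal A$ satisfies $y_1+y_2-y_1y_2=1$, so the one-parameter shift $(s_{11},s_{22},s_{12})\mapsto(s_{11}-2t,\,s_{22}-2t,\,s_{12}+t)$ leaves every ratio $p_0(\mathbf y)/p_0(\mathbf y')$ with $\mathbf y,\mathbf y'\in\mathcal A$ unchanged. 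What remains after the ratio step is therefore a one-dimensional family in $(s_{11},s_{12},s_{22})$, exactly as the paper finds, not just a free $s_{12}$ with $s_{11},s_{22}$ fixed.

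This breaks your final formula as written: your $A_0$ contains factors $e^{s_{11}/2}$ and $e^{s_{22}/2}$, which move along the family, so $A_0$ is not a constant independent of $s_{12}$. The repair is immediate and is what the paper does: use the ratio $p_0(0,0)/p_0(\mathbf e_1)=e^{-s_{11}/2}\sum_{\mathbf y:\,y_1=y_2=0}\exp\bigl(\tfrac12\mathbf y^\top\bfS\mathbf y\bigr)$, whose sum depends only on the already-determined parameters $(s_{jk})_{j,k\geq 3}$, so the right-hand side is strictly monotone in $s_{11}$ and pins it down; then $s_{22}$ and $s_{12}$ follow from the two linear constraints. (Equivalently, along the shift above the $\mathcal A$-partition sum scales uniformly by $e^{-t}$ while $Z_{00}$ is fixed, so $p_0(0,0)$ is strictly monotone in $t$ --- your monotonicity conclusion is correct, only the justification needs the right parametrization of the residual family.)
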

\begin{proof}

We first prove the statement for $J\geq 4$. Suppose that $p_0 (\mathbf y) = P(\mathbf Y = \mathbf y|\bfS), $ for any $\mathbf y \in \mathcal A$ and $p_0(0,0) = P(Y_1 = 0, Y_2 =0|\bfS)$. We will prove that $\bfS =\bfS_0$.

We start by considering items $1, 2, 3, 4$. We define the set $\mathcal A_{3,4} =   \{(x_1, ..., x_J)^\top \in \mathcal A: x_5 = ... = x_J=0\}$. We note that $\mathcal A_{3,4}$ has 12 elements. Using $\bfy_d = (1, 0,0,0,0,...,0)^\top \in \mathcal A_{3,4}$ as the baseline pattern,  for any $\bfy \in \mathcal A_{3,4}$ such that $\bfy \neq \bfy_d$ we have 

$$ \log\left[\frac{\exp(\frac{1}{2}\bfy^\top\bfS\bfy)}{c(\bfS)} / \frac{\exp(\frac{1}{2}\bfy_d^\top\bfS\bfy_d)}{c(\bfS)}\right] = \frac{1}{2}\bfy^\top\bfS\bfy-\frac{1}{2}\bfy_d^\top\bfS\bfy_d = \frac{1}{2}\bfy^\top\bfS_0\bfy-\frac{1}{2}\bfy_d^\top\bfS_0\bfy_d.$$
That gives us 11 linear equations for 10 parameters, $s_{ij}$, $i,j\leq 4$. By simplifying these equations, we have 
1) two linear equations for $(s_{11}, s_{12}, s_{22})$
\begin{equation}\label{eq:leqs1}
    \begin{aligned}
        s_{11} - s_{22} &= s_{11}^0 - s_{22}^0\\
        2s_{12} + s_{22} &=  2s_{12}^0 + s_{22}^0
    \end{aligned}
\end{equation}
and 2) $s_{1i} = s_{1i}^0$, $s_{2i} = s_{2i}^0$, $s_{ij} = s_{ij}^0$ for all $i,j = 3, 4$. 

We repeat the above calculation for any four-item set involving items 1 and 2. By choosing any item pair $i,j > 2$, $i\neq j$ and repeating the above calculation with patterns in the set 
$\mathcal A_{i,j} =   \{(x_1, ..., x_J)^\top \in \mathcal A: x_l = 0, l \neq 1, 2, i, j\}$, we have  $s_{1i} = s_{1i}^0$, $s_{2i} = s_{2i}^0$, 
$s_{ij} = s_{ij}^0$ for all $i,j > 2$, $i\neq j$. With the above argument and \eqref{eq:leqs1}, we only need to show $s_{11} = s_{11}^0$ to prove $\bfS = \bfS_0$. To prove $s_{11} = s_{11}^0$, we use $p_0(0,0)/p_0(\bfy_d) = P(Y_1 = 0, Y_2 =0|\bfS)/P(\bfY = \bfy_d|\bfS)$ and that $s_{ij} = s_{ij}^0$ for all $i,j > 2$, $i\neq j$ which we have proved. We have 

$$\sum_{\bfy \in \mathcal A_0} \exp\left(-\frac{1}{2}s_{11}^0 + \frac{1}{2}\bfy^\top\bfS_0\bfy\right) = \sum_{\bfy \in \mathcal A_0} \exp\left(-\frac{1}{2}s_{11} + \frac{1}{2}\bfy^\top\bfS_0\bfy\right),$$
where $\mathcal A_0 = \{(x_1, ..., x_J)^\top \in \{0,1\}^J: x_1 = x_2 = 0\}$. As the right-hand side of the above equation is a strictly monotone decreasing function of $s_{11}$, we know that $s_{11} = s_{11}^0$ is the only solution to the above equation. This proves the $J \geq 4$ case. 

We now move on to the case when $J = 3$. We consider $\mathcal A = \{(x_1, x_2, x_3)^\top \in \{0,1\}^3:x_1 = 1 \mbox{~or~} x_2=1 \}$ and $\bfy_d = (1, 0,0)^\top$. Using $\bfy_d$ as the baseline, for any $\bfy \in \mathcal A$, $\bfy \neq \bfy_d$, we construct five linear equations given by $\log(p_0(\bfy)/p_0(\bfy_d) )= \log(P(\bfY =\bfy\vert\bfS)/P(\bfY =\bfy_d\vert\bfS))$. From these equations, we obtain: 
1) two linear equations for $(s_{11},s_{12},s_{22})$
\begin{equation}
    \begin{aligned}
        s_{11} - s_{22} &= s_{11}^0 - s_{22}^0 \\
        2s_{12} + s_{22} &= 2s_{12}^0 + s_{22}^0
    \end{aligned}
\end{equation}
and 2) $s_{13} = s_{13}^0, s_{23}=s_{23}^0, s_{33} = s_{33}^0$. Again, with the above equations, $\bfS = \bfS^0$ if $s_{11} = s_{11}^0$. To show $s_{11} = s_{11}^0$, we use $p_0(0,0)/p_0(\bfy_d) = P(Y_1 = 0, Y_2 =0|\bfS)/P(\bfY = \bfy_d|\bfS)$ and $s_{33} = s_{33}^0$. We have 
$$\exp(-\frac{1}{2}s_{11}^0) + \exp(-\frac{1}{2}s_{11}^0 + \frac{1}{2}s_{33}^0) =  \exp(-\frac{1}{2}s_{11}) + \exp(-\frac{1}{2}s_{11} + \frac{1}{2}s_{33}^0).$$
As the right-hand side is a strictly monotone decreasing function of $s_{11}$, we know that $s_{11} = s_{11}^0$ is the only solution to the above equation. This proves the $J=3$ case, which completes the proof. 
\end{proof}

Following the same proof strategy as above, it can be further shown that  
$s_{11}, s_{12},$ and $s_{22}$ are not identified in the complete-case analysis, while the rest of the parameters are. This is consistent with the result of Simulation Study II, where the other model parameters can be accurately estimated while the estimates of $s_{11}, s_{12},$ and $s_{22}$ are substantially different from the corresponding true parameters. 

\section{Technical proofs}\label{sec:technical_proofs}

\subsection{A lemma for imputation consistency}\label{subsec:verify_liu_theorem1}
Following the derivation of Section 2.3 in the main text, under ignorable missingness assumption, the posterior distribution for $\bfS$ satisfies
$\pi_N(\bfS) \propto p(\bfy_{obs}\mid\bfS) \pi(\bfS)$. Under the same Bayesian model, one can impute the missing values from the posterior predictive distribution. That is, the posterior predictive distribution for $\bfy_{i,mis}$, $i=1, ..., N$,   takes the form 
$$p_N(\bfy_{1,mis}, ..., \bfy_{N,mis}) = \int \pi_N(\bfS)\left(\prod_{i=1}^N \frac{\exp\left(\bfy_i^\top\bfS\bfy_i/2\right)}{(c(\bfS)p_i(\bfy_{i,obs}\mid \bfS))} \right) d\bfS,$$ 
where $p_i(\bfy_{i,obs}\mid \bfS) = \sum_{y_{ij}:z_{ij}=0} \exp\left(\bfy_i^\top\bfS\bfy_i/2\right)/c(\bfS)$. Further, suppose that the 
Algorithm~\ref{alg:pseudo_imputation} converges to a stationary distribution, and let $p^*_N(\bfy_{1,mis}, ..., \bfy_{N,mis})$ be the implied posterior predictive distribution given the observed data. 
Then we show in Lemma~\ref{lemma:data_consist}, which is an adaptation of Theorem 1 of \cite{liu2014stationary}, suggests that $p_N(\bfy_{1,mis}, ..., \bfy_{N,mis})$ and $p^*_N(\bfy_{1,mis}, ..., \bfy_{N,mis})$ converge to each other in the total variation sense. 

\begin{lemma}\label{lemma:data_consist}
  Assume the following assumptions hold: 1) The Markov chain for missing data, generated by the iterative imputation algorithm Algorithm~\ref{alg:pseudo_imputation}, is positive Harris recurrent and thus admits a unique stationary distribution denoted by $p_N^*$; 2) The missing data process is ignorable; 3) A regularity condition holds for prior distributions of Ising model parameters and auxiliary parameters, as detailed in \Cref{assume:assumption_for_priro}. 
  Then the implied posterior predictive distribution $p^*_N$ is consistent with the true posterior predictive distribution, $p_N$, i.e.,
  \begin{equation}
  \begin{aligned}
      d_{TV}\left(p^*_N,p_N\right) = \max_{\bfy_{1,mis},\ldots,\bfy_{N,mis}} \vert p^*_N(\bfy_{1,mis}, ..., \bfy_{N,mis}) - p_N(\bfy_{1,mis}, ..., \bfy_{N,mis})\vert \rightarrow 0,
  \end{aligned}
  \end{equation}
  in probability as $N\rightarrow\infty$. 
  
\end{lemma}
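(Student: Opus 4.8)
The strategy is to verify that the hypotheses of Theorem~1 of \cite{liu2014stationary} are satisfied in the present setting, and then invoke that theorem directly. Liu, Gelman, Hill, Su and Kropko's result states, roughly, that when the conditional models used in iterative imputation are \emph{compatible} with a joint model, the stationary distribution of the imputation Markov chain (assumed to exist and be unique via Harris recurrence) agrees asymptotically with the Bayesian posterior predictive distribution coming from that joint model, provided a regularity condition on priors holds and the missingness is ignorable. So the bulk of the work is a \emph{translation} argument: identifying each ingredient of their theorem with the corresponding object here.

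\textbf{Step 1 (Compatibility).} The key structural fact to establish is that the family of logistic conditional models used in the imputation step, with unconstrained auxiliary coefficients $\bbb_j$, is compatible with the Ising joint model \eqref{eq:ising_model}. Concretely, when the true data-generating mechanism is an Ising model with parameter matrix $\bfS_0$, the conditional distribution of $Y_j$ given $\YY_{-j}$ is exactly the logistic regression with coefficient vector $\bfs_j$ (the $j$th column of $\bfS_0$); see \eqref{eq:ising_model_pseudo}. Hence the per-variable conditional models we sample from are a correctly specified, over-parametrized (the symmetry constraint $s_{jk}=s_{kj}$ is dropped) version of the true conditionals. I would state this as the compatibility condition required by \cite{liu2014stationary}, noting that over-parametrization does not break compatibility because the joint Ising model lies in the image of the conditional model family.

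\textbf{Step 2 (Ignorability and the posterior predictive).} Using assumption~2 and the factorization \eqref{eq:joint2} derived in Section~2.3, the posterior for $\bfS$ reduces to $\pi_N(\bfS)\propto p(\bfy_{obs}\mid\bfS)\pi(\bfS)$, so that the ``target'' posterior predictive $p_N(\bfy_{1,mis},\ldots,\bfy_{N,mis})$ defined in the lemma statement is exactly the Bayesian posterior predictive of a correctly specified joint model, with the missingness mechanism $q(\mathbf z\mid\cdot,\boldsymbol\phi)$ playing no role. This matches the ignorable-missingness hypothesis in \cite{liu2014stationary}.

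\textbf{Step 3 (Regularity condition and convergence).} Assumptions~1 and 3 supply, respectively, the existence and uniqueness of the stationary distribution $p_N^*$ (positive Harris recurrence) and the prior regularity condition detailed in \Cref{assume:assumption_for_priro}; the latter is what \cite{liu2014stationary} use to control the behaviour of the pseudo-posterior on the auxiliary parameters $\bbb_j$ and on $\bfS$ as $N\to\infty$. With Steps~1--2 in place, Theorem~1 of \cite{liu2014stationary} applies verbatim and yields $d_{TV}(p_N^*,p_N)\to 0$ in probability, which is the claim. I would close by remarking that the supremum over $\bfy_{1,mis},\ldots,\bfy_{N,mis}$ is a maximum because, for each fixed $N$, the missing data live in a finite set $\{0,1\}^{\sum_i |\{j:z_{ij}=0\}|}$.

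\textbf{Main obstacle.} The delicate point is Step~1 together with the precise matching of regularity conditions in Step~3: \cite{liu2014stationary} phrase their compatibility and regularity hypotheses for generic conditional models, and one must check that the unconstrained logistic parametrization (as opposed to the symmetric $\bfS$) still meets them --- in particular that compatibility is not lost by over-parametrization and that the prior condition in \Cref{assume:assumption_for_priro} is exactly the instance of their hypothesis needed. Everything else is bookkeeping: re-expressing our notation ($\Omega_j$, $\bbb_j$, the pseudo-posteriors \eqref{eq:post_beta}) in the language of their theorem and confirming ignorability via \eqref{eq:joint2}.
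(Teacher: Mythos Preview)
Your proposal is correct and follows essentially the same strategy as the paper: verify compatibility of the logistic conditionals with the Ising joint model, invoke ignorability, check the prior regularity condition, and appeal to \cite{liu2014stationary}. The only presentational difference is that the paper does not cite Theorem~1 of \cite{liu2014stationary} as a black box but instead introduces an auxiliary (intractable) Gibbs chain that samples $\bfS$ from the full Ising posterior and imputes variable-by-variable (Algorithm~\ref{alg:joint_imputation}), denotes its stationary distribution $p_N'$, and proves the lemma via the triangle-inequality split $d_{TV}(p_N^*,p_N')+d_{TV}(p_N',p_N)$---handling the first term with Proposition~1 and Lemmas~1--2 of \cite{liu2014stationary} (kernel closeness on the high-probability set $A_N$, using the Fisher-information lower bound) and the second by noting that Algorithm~\ref{alg:joint_imputation} is a standard Gibbs sampler targeting $p_N$; this is exactly your compatibility-plus-ignorability argument, made explicit.
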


\begin{algorithm}\caption{Ising model Gibbs chain}\label{alg:joint_imputation}
  \KwData{Given observed data, initial values for the Ising model parameters $\bfS^{(0)}$, randomly imputed missing data $\bfy_1^{(0)},\ldots,\bfy_N^{(0)}$, MCMC length $T$, burn-in size $T_0$.}
  \For{iteration $t$ in range $1$ to $T$}{
    \For{each $j = 1$ to $J$}{      
  
      {Sample $\bfS^{(t)}$ from $p(\bfS \mid \bfy_1,\ldots,\bfy_N)\propto\pi(\bfS)\prod_{i=1}^N\exp(\bfy_i^\top\bfS\bfy_i/2)/c(\bfS).$}

      {Sample $y_{ij}^{(t)}$ for each $i\in\Omega_j$ from the Bernoulli distribution $p(y_{ij}\mid\bfy_{i,-j},\bfS^{(t)})$ and update $\bfy_1,\ldots,\bfy_N$ with the sampled values.} 
    }
  
    } 
  \KwResult{$\hat\bfS = \frac{1}{T-T_0}\sum_{t=T_0+1}^T\bfS^{(t)}$.}
\end{algorithm}

To prove Lemma~\ref{lemma:data_consist}, we start by define a Gibbs sampling process for the joint Ising model, as outlined in Algorithm~\ref{alg:joint_imputation}. This algorithm is constructed for the theoretical purposes since the step of sampling $\bfS$ is intractable. 
The aim of our proof is to validate the posterior predictive distribution of missing data given observed data $p_N^*$, implied by the Algorithm~\ref{alg:pseudo_imputation}, converges in total variation to the true posterior predictive distribution $p_N$.
We first establish that $p_N^*$ in fact converges to the stationary distribution of the Gibbs chain, denoted as $p_N'$, implied by Algorithm~\ref{alg:joint_imputation}. By corroborating the convergence of $p_N'$ and $p_N$, the proof of Lemma~\ref{lemma:data_consist} is thereby completed. In the following proof, we reparameterize $\bfalpha=\text{vech}(\bfS)$ for convenience. We define the following for the proof. 
\begin{definition}
  \quad 
  \begin{itemize}
    \item We denotes $\mathcal{Y}$ the data matrix with $N$ samples and $J$ variables, $\mathcal{Y}_j$ the $j$th column and $\mathcal{Y}_{-j}$ the remaining $j-1$ columns.
    \item Define $A_N=\{\mathcal{Y}\mid \Vert\hat\bfalpha(\mathcal{Y})\Vert\leq\gamma\},$ where $\hat\bfalpha(\mathcal{Y})$ is the complete-data maximum likelihood estimator, where $\gamma$ can be sufficiently large so that 
    \begin{equation}
      \begin{aligned}
        p_N^*(A_N) &\rightarrow 1,\text{ and}\\
        p_N'(A_N) &\rightarrow 1,
      \end{aligned}
    \end{equation}
    in probability as $N\rightarrow\infty$.
    \item Let  
    \begin{equation}
      K(\omega,d\omega') = \pr(\mathcal{Y}_{mis}^{(k+1)}\in d\omega'\mid \mathcal{Y}_{mis}^{(k)} = \omega)
    \end{equation}
    be the transition kernels for the missing data chain, which depend on $\mathcal{Y}_{obs}$.
    \item Let $K^*(\omega,d\omega')$ and $K'(\omega,d\omega')$ be the transition kernels for the missing data chains from Algorithm~\ref{alg:pseudo_imputation} and Algorithm~\ref{alg:joint_imputation}, respectively.
    \item We further define the transition kernels conditional on $A_N$ by 
    \begin{equation}
      \tilde K(\omega,B) = \frac{K(\omega,B\cap  A_N)}{K(\omega,A_N)}.
    \end{equation}
    So we have $\tilde K^*(\omega,\cdot),\tilde K'(\omega,\cdot)$ are two transition kernels for the missing data chains conditional on $A_N$. And let $\tilde{p}_N^*,$ $\tilde{p}_N'$ be their stationary distributions, respectively.
    \item Define $\Vert\mu\Vert_1 = \sup_{\vert h\vert\leq 1}\int h(x)\mu(dx)$.
  \end{itemize}
\end{definition}

\begin{assumption}[A regularity condition for priors]\label{assume:assumption_for_priro}

    Let $\hat\bfalpha(\mathcal{Y})$ be the complete data maximum likelihood estimator and $A_N = \{\mathcal{Y}: \Vert\hat\bfalpha(\mathcal{Y})\Vert\leq\gamma\}$. Since the logistic models are with Ising model, we also have map $\bbb_j=T_j(\bfalpha),j=1,\ldots,J.$ Let $\pi_j(\bbb_j)$ and $\pi(\bfalpha)$ be prior distributions. Further define $\bbb_j^* = T_j^*(\bfalpha)$ such that $\tilde{T}_j(\bfalpha) = \{T_j(\bfalpha),T_j^*(\bfalpha)\}$ is a one-to-one invertible map ($\bbb_j^*$ can be $\bfalpha\setminus\bbb_j$). 
    Define $$\pi_j^*(\bbb_j,\bbb_j^*) = \det(\partial\tilde{T}_j/\partial\bfalpha)^{-1}\pi(\tilde{T}_j^{-1}(\bbb_j,\bbb_j^*)).$$
    Let $L_j(\bbb_j) = \pi_j(\bbb_j)/\pi_{j,\mathcal{Y}_{-j}}(\bbb_j),$ where 
    \begin{equation}
        \begin{aligned}
            \pi_{j,\mathcal{Y}_{-j}}(\bbb_j) &= \int p(\mathcal{Y}_{-j}\mid \bbb_j,\bbb_j^*)\pi_j^*(\bbb_j,\bbb_j^*)d\bbb_j^*\\
            &=\int\sum_{y_{1j},\ldots,y_{Nj}} p(\mathcal{Y}_j,\mathcal{Y}_{-j}\mid \bbb_j,\bbb_j^*)\pi_j^*(\bbb_j,\bbb_j^*)d\bbb_j^*.
        \end{aligned}
    \end{equation}
    The assumption requires that on the set $A_N$, $$\sup_{\Vert\bbb_j\Vert<\gamma}\partial\log L_j(\bbb_j)<\infty.$$
\end{assumption}
We remark that the above assumption holds for the Ising model with the normal priors adopted in the current paper. Specifically, $\pi_j(\bbb_j)$ and $\pi(\bfalpha)$ are $J$-variate and $J(J+1)/2$-variate normal distributions, respectively. Moreover, $\pi_j^*$ can also be a $J(J+1)/2$ normal distribution. 
Since on $A_N$, $L_j(\bbb_j)$ is a continuously differentiable function defined in $\mathbb{R}^{J(J+1)/2}$, we have then it is Lipschitz on any compact set in $\mathbb{R}^{J(J+1)/2}$. That is, on $A_N$, $\sup_{\Vert\bbb_j\Vert<\gamma}\partial\log L_j(\bbb_j) = \sup_{\Vert\bbb_j\Vert<\gamma}\left[\partial \log\pi_j(\bbb_j) - \partial\log\pi_{j,\mathcal{Y}_{-j}}(\bbb_j)\right]<\infty.$

\begin{proof}[Proof of Lemma~\ref{lemma:data_consist}]
  According to the assumptions, the Markov chain for the missing data produced by the Gibbs sampling procedure Algorithm~\ref{alg:joint_imputation} is positive Harris recurrent and thus admit a unique stationary distribution $p_N'$.
  We verify the conditions  holds. First, on $A_N$, the Fisher information of the Ising model has a lower bound of $\epsilon n$ for some $\epsilon$. So according to proposition 1 of \cite{liu2014stationary}, we have $\Vert K^*(\omega,\cdot)-K'(\omega,\cdot)\Vert_1\rightarrow 0$ uniformly for $\omega\in A_N$, that is,
  \begin{equation}
    \lim_{N\rightarrow\infty}\Vert \tilde K^*(\omega,\cdot) - \tilde K'(\omega,\cdot)\Vert_1 = 0.
  \end{equation}
  According to the standard bound for Markov chain convergence rates, there exists a common starting value $\omega\in C$ and a bound $r_k$ such that (ii) of Lemma 2 holds. Then Lemma 2 gives us
  \begin{equation}
    d_{TV}(\tilde{p}_N^*, \tilde{p}_N')\rightarrow 0,
  \end{equation}
  Further combining with conclusions in Lemma 1 in \cite{liu2014stationary} that $d_{TV}(p_N^*, \tilde p_N^*)\rightarrow 0,$ and $d_{TV}(p_N', \tilde p_N')\rightarrow 0$, we have the convergence of iterative imputation of compatible models, 
  \begin{equation}\label{eq:consist_iter}
    d_{TV}(p_N^*, p_N') \rightarrow 0,
  \end{equation}
  in probability as $N\rightarrow \infty$.
  Next, based on the construction of the Gibbs sampling procedure Algorithm~\ref{alg:joint_imputation}, we have the sequence converges to the target distribution, that is,
  \begin{equation}\label{eq:consist_gibbs}
    d_{TV}(p_N', p_N) \rightarrow 0,
  \end{equation}
  in probability as $N\rightarrow \infty$.
  Based on \eqref{eq:consist_iter} and \eqref{eq:consist_gibbs}, we have
  \begin{equation}
    \begin{aligned}
      d_{TV}(p_N^*, p_N) &= \sup_{A\in\mathcal{F}} \vert p_N^*(A) - p_N(A)\vert\\
      & \leq \sup_{A\in\mathcal{F}} \vert p_N^*(A) - p_N'(A)\vert + \sup_{A\in\mathcal{F}} \vert p_N'(A) - p_N(A)\vert \rightarrow 0,
    \end{aligned}
  \end{equation} 
  in probability as $N\rightarrow \infty$.
\end{proof}

\begin{remark}
  \Cref{lemma:data_consist} emphasizes the consistency of the proposed iterative imputation process. It implies that the implied posterior predictive distribution gradually converges to the posterior predictive distribution under standard Bayesian inference, underscoring the validity of the iterative imputation.
\end{remark}

\subsection[Proof of theorem 1]{Proof of Theorem~\ref{thm:params_consist}}

We will use $\bfalpha$, the half-vectorization of $\bfS$ in the proof. Denote $\pi_N^*(\bfalpha)$ the posterior density of $\bfalpha$ implied by the stationary distribution of the proposed method. Let $\mathcal{Y}$ be the data matrix with $\mathcal{Y}_{mis}$ and $\mathcal{Y}_{obs}$ being the missing and observed parts, respectively. We have
\begin{equation}\label{eq:theorem2_conclusion}
\begin{aligned}
    &\int_{B_\varepsilon(\bfalpha_0)}\pi_N^*(\bfalpha)d\bfalpha \\
    =&\int_{B_\varepsilon(\bfalpha_0)}\left[ \sum_{\mathcal{Y}_{mis}} p^*(\mathcal{Y}_{mis},\mathcal{Y}_{obs}\mid\bfalpha)p^*(\mathcal{Y}_{mis}\mid\mathcal{Y}_{obs}) \right]\pi(\bfalpha) / c_N d\bfalpha\\
    =&\sum_{\mathcal{Y}_{mis}}\left[\int_{B_\varepsilon(\bfalpha_0)} p^*(\mathcal{Y}_{mis},\mathcal{Y}_{obs}\mid\bfalpha) \pi(\bfalpha) / c_N d\bfalpha\right] p^*(\mathcal{Y}_{mis}\mid\mathcal{Y}_{obs})\\
    =&\sum_{\mathcal{Y}_{mis}}\left[\int_{B_\varepsilon(\bfalpha_0)} \exp(-N f_N(\bfalpha)) \pi(\bfalpha) / c_N d\bfalpha\right] p^*(\mathcal{Y}_{mis}\mid\mathcal{Y}_{obs}),
\end{aligned}
\end{equation}
where $f_N(\bfalpha) = -\frac{1}{N}\sum_{i=1}^N\log p^*(\bfy_i\mid\bfalpha)$ given in \eqref{eq:ising_model_pseudo}, $c_N = \int \exp(-N f_N(\bfalpha)) \pi(\bfalpha)d\bfalpha$. 
We further let 
\begin{equation}\label{eq:pi_N}
\begin{aligned}
    &\int_{B_\varepsilon(\bfalpha_0)}\pi_N(\bfalpha)d\bfalpha \\
    =&\sum_{\mathcal{Y}_{mis}}\left[\int_{B_\varepsilon(\bfalpha_0)} \exp(-N f_N(\bfalpha)) \pi(\bfalpha) / c_N d\bfalpha\right] p(\mathcal{Y}_{mis}\mid\mathcal{Y}_{obs}).
\end{aligned}
\end{equation}

Let $\Theta\in \mathbb{R}^{J(J+1)/2}, E\subset\Theta$ be open and bounded. It can be veried that: 1) $f_N$ have continuous third derivatives; 2) $f_N\rightarrow f$ pointwise for some $f$; 3) $f''(\bfalpha_0)$ is positive definite; 4) $f'''(\bfalpha_0)$ is uniformly bounded on $E$; 5) each $f_N$ is convex and $f'(\bfalpha_0)=0$. 
Then, according to the generalized posterior concentration theorem \citep[see Theorem 5,][]{miller2021asymptotic}, we have for any $\varepsilon > 0$,
\begin{equation}
    \int_{B_\varepsilon(\bfalpha_0)}\exp(-N f_N(\bfalpha)) \pi(\bfalpha) / c_N d\bfalpha \rightarrow 1
\end{equation}
in probability as $N\rightarrow\infty.$ Consequently,
\begin{equation}\label{eq:theorem2_pi_N_converge}
    \int_{B_\varepsilon(\bfalpha_0)}\pi_N(\bfalpha)d\bfalpha \rightarrow 1,
\end{equation}
in probability as $N\rightarrow\infty.$
Finally, by employing the convergence of imputation from Lemma~\ref{lemma:data_consist}, specifically $$d_{TV}(p^*(\mathcal{Y}_{mis}\mid\mathcal{Y}_{obs}) - p(\mathcal{Y}_{mis}\mid\mathcal{Y}_{obs}))\rightarrow 0$$ in probability as $N\rightarrow\infty$, we arrive at 
\begin{equation}\label{eq:theorem2_diff_converge}
    \sum_{\mathcal{Y}_{mis}}\left[\int_{B_\varepsilon(\bfalpha_0)} \exp(-N f_N(\bfalpha)) \pi(\bfalpha) / c_N d\bfalpha\right] (p^*(\mathcal{Y}_{mis}\mid\mathcal{Y}_{obs}) - p(\mathcal{Y}_{mis}\mid\mathcal{Y}_{obs}))\rightarrow 0
\end{equation}
in probability as $N\rightarrow\infty.$ This conclude the proof, given that
\begin{equation}
    \int_{B_\varepsilon(\bfalpha_0)}\pi_N^*(\bfalpha)d\bfalpha = \int_{B_\varepsilon(\bfalpha_0)}\pi_N(\bfalpha)d\bfalpha + \int_{B_\varepsilon(\bfalpha_0)}(\pi_N^*(\bfalpha)-\pi_N(\bfalpha))d\bfalpha,
\end{equation}
where the first term converges to 1 (i.e., \eqref{eq:theorem2_pi_N_converge}) and the second term converges to 0 (i.e., \eqref{eq:theorem2_diff_converge}) in probability as $N\rightarrow\infty$.


\section[Computation details for sampling details]{Computation details for sampling $\bbb_j$ and $\bfS$}

Upon observation of the logistic form presented in the conditional distribution when sampling the auxiliary parameters $\bbb_j$, we employ the {P{\'o}lya}-Gamma for effective sampling. Denote a random variable $\omega$ follows the {P{\'o}lya}-Gamma distribution $PG(b,c), b>0, c\in\mathbb{R}$ with parameters $b>0$ and $c\in\mathbb{R}$ if it is a weighted sum of independent Gamma random variables
\begin{equation}
    \omega = \frac{1}{2\pi^2}\sum_{k=1}^\infty \frac{g_k}{(k-1/2)^2+c^2/(4\pi^2)},
\end{equation}
where $g_k\sim\Gamma(b,1)$, which is the Gamma distribution with shape and rate parameters as $b$ and $1$, respectively. 


\subsection[derivation of posterior sj]{Derivation of posterior distribution of $\bbb_j$}\label{sec:derive_beta_j}
By introducing {P{\'o}lya}-Gamma latent variables $\omega_{ij}\sim PG(1,0),i=1,\ldots,N$, we establish a connection between the logistic form and the normal distribution. 
We rephrase the $j$th conditional distribution $p(y_{ij}\mid\bfy_{i,-j},\bbb_j)$ by the following equation,
\begin{equation}\label{eq:polya_gamma1}
  \frac{\exp(\phi_{ij})^{y_{ij}}}{1 + \exp(\phi_{ij})}
  = 2^{-1}\exp(\kappa_{ij}\phi_{ij})\mathbb{E}_{\omega_{ij}}\left[\exp(-\omega_{ij}\phi_{ij}^2/2)\right],
\end{equation} 
where $\kappa_{ij} = y_{ij} - 1/2$, $\omega_{ij}\sim PG(1,0)$, $\omega_{ij}\mid \phi_{ij}\sim PG(1, \phi_{ij}), \phi_{ij}=\beta_{jj}/2+\sum_{k\neq j}\beta_{jk}y_{ik}.$ 

Denote $\mathcal{Y}=(\bfy_1,\ldots,\bfy_N)^\top$, $\mathcal{Y}_j$ as the $j$th column of $\mathcal{Y}$, and $\mathcal{Y}_{-j}$ the remaining $j-1$ columns. Given $\bbb_j$, sample $N$ augmentation variables $\omega_{ij},i=1,\ldots,N$, each from a {P{\'o}lya}-Gamma distribution
\begin{equation}
    \omega_{ij}\mid\bbb_j,\bfy_i\sim PG(1,\beta_{jj}/2+\sum_{k\neq j}\beta_{jk}y_{ik}),
\end{equation}
based on \cref{eq:polya_gamma1}. Moreover, for the $j$th variable, we have
\begin{equation}\label{eq:augmented_condi}
  \begin{aligned}
    p(\mathcal{Y}_j\mid\mathcal{Y}_{-j},\bbb_j,\boldsymbol{\omega}_j) &= \prod_{i=1}^N p(y_{ij}\mid \bfy_{i,-j},\bbb_j,\omega_{ij})\\
    &= \prod_{i=1}^N 2^{-1}\exp(\kappa_{ij}(\beta_{jj}/2+\sum_{k\neq j}\beta_{jk}y_{ik}))\exp(-\omega_{ij}(\beta_{jj}/2+\sum_{k\neq j}\beta_{jk}y_{ik})^2/2)\\
    &\propto \exp\left[-\frac{1}{2}\left(\bbb_j^\top(\mathcal{Y} - \bfkappa_j\boldsymbol{e}_j^\top)^\top D_{\omega_j}(\mathcal{Y} - \bfkappa_j\boldsymbol{e}_j^\top)\bbb_j-2\bfkappa_j^\top(\mathcal{Y} - \bfkappa_j\boldsymbol{e}_j^\top)\bbb_j\right) \right],
  \end{aligned}
\end{equation}
where $\bfkappa_j=(\kappa_{1j},\ldots,\kappa_{Nj})^\top, \kappa_{ij} = y_{ij} - 1/2, D_{\omega_j} = \text{diag}(\boldsymbol{\omega}_j)$. We further have the following conditional distribution for $\bbb_j$
\begin{equation}
  \begin{aligned}
    p(\bbb_j&\mid\mathcal{Y},\boldsymbol{\omega}_j)\propto p(\mathcal{Y}_j\mid\mathcal{Y}_{-j},\bbb_j,\boldsymbol{\omega}_j)\pi_j(\bbb_j)\\
    &\propto \exp\left[-\frac{1}{2}\left(\bbb_j^\top(\mathcal{Y} - \bfkappa_j\boldsymbol{e}_j^\top)^\top D_{\omega_j}(\mathcal{Y} - \bfkappa_j\boldsymbol{e}_j^\top)\bbb_j-2\bfkappa_j^\top(\mathcal{Y} - \bfkappa_j\boldsymbol{e}_j^\top)\bbb_j\right)-\frac{1}{2}\bbb_j^\top D_{\beta_j}\bbb_j\right]\\
    &=\exp\left[-\frac{1}{2}(\bbb_j-\bfmu_{\beta_j})^\top\Sigma_{\beta_j}^{-1}(\bbb_j-\bfmu_{\beta_j})\right],
  \end{aligned}
\end{equation}
where $\Sigma_{\beta_j} = \left[(\mathcal{Y} - \bfkappa_j\boldsymbol{e}_j^\top)^\top D_{\omega_j}(\mathcal{Y} - \bfkappa_j\boldsymbol{e}_j^\top)+D_{\beta_j}\right]^{-1},$ $\bfmu_{\beta_j} = \Sigma_{\beta_j}(\mathcal{Y} - \bfkappa_j\boldsymbol{e}_j^\top)^\top\bfkappa_j$. 
Here, $\boldsymbol{e}_j$ is a $J$-dimensional vector with the $j$th element be one and all others be zeros, $D_{\omega_j}=\text{diag}(\boldsymbol{\omega}_j),$ $D_{\beta_j} = \text{diag}(\boldsymbol{\tau}_j),$ where $\tau_{jl}=\sigma_1^{-2},$ for $l\neq j$ and $\tau_{jj}=\sigma_2^{-2}.$ A weak informative prior on the intercept parameter by letting $\sigma_2^2>\sigma_1^2$.

To summarize, the introduced {P{\'o}lya}-Gamma latent variables $\omega_{ij}$ establish a connection between the logistic form and the normal distribution that lead to a normal form of the posterior $\bbb_j,$ i.e., 
\begin{equation}
    \begin{aligned}
        \bbb_j&\mid\mathcal{Y},\boldsymbol{\omega}_j \sim N(\bfmu_{\beta_j}, \Sigma_{\beta_j}),\\
        \Sigma_{\beta_j} &= \left[(\mathcal{Y} - \bfkappa_j\boldsymbol{e}_j^\top)^\top D_{\omega_j}(\mathcal{Y} - \bfkappa_j\boldsymbol{e}_j^\top)+D_{\beta_j}\right]^{-1},\\ \bfmu_{\beta_j} &= \Sigma_{\beta_j}(\mathcal{Y} - \bfkappa_j\boldsymbol{e}_j^\top)^\top\bfkappa_j.
    \end{aligned}
\end{equation}

\subsection[linear trans]{Derivation of posterior distribution of $\bfS$}\label{sec:derive_alpha}
Observe that there is constraint between edge parameters $s_{jk} = s_{jk}$ for $j\neq k$, which is reflected by the symmetry of matrix $\mathbf{S}$.  It is sufficient to examine the lower triangular elements of $\mathbf{S}$ while adhering to the equality constraint. To impose such symmetric constraint on $\bfS$, we reparameterize $\bfS$ by $\boldsymbol{\alpha}=\text{vech}(\bfS)$, which is the half-vectorization of $\bfS$. Specifically,
\begin{equation}
    \boldsymbol{\alpha}=(s_{11},\ldots,s_{J1},s_{22},\ldots,s_{J2},\ldots,s_{J-1,J-1},s_{J,J-1},s_{J,J})^\top = (\alpha_1,\ldots,\alpha_{J(J+1)/2})^\top.
\end{equation}
To establish a relationship between $\bfS$ and $\bfalpha$, we first define the following equation,
\begin{equation}\label{eq:col_vec}
  \bfs_j = E_j\text{vec}(\bfS).
\end{equation}
In this equation, $\text{vec}(\bfS)=(\bfs_1^\top,\ldots,\bfs_J^\top)^\top$ represents the vectorization of the matrix $\bfS$. The matrix $E_j = (0_J,\ldots,I_J,\ldots,0_J)$ is a $J\times J^2$ matrix, where the $j$th row block is the identity matrix $I_J$ and all other row blocks are zero matrices.
Next, we can express $\text{vec}(\bfS)$ as follows,
\begin{equation}\label{eq:vec_half_vec}
  \text{vec}(\bfS) = D_J\bfalpha,
\end{equation}
where $D_J$ is a $J^2\times J(J+1)/2$ duplication matrix, which can be explicitly defined as,
\begin{equation}
  D_J^\top = \sum_{i\geq j} u_{ij}(\text{vec} T_{ij})^\top.
\end{equation}
Here, $u_{ij}$ is a unit vector of order $J(J+1)/2$ with ones in the position $(j-1)J + i - j(j-1)/2$ and zeros elsewhere. The matrix $T_{ij}$ is a $J\times J$ matrix with ones in position $(i,j)$ and $(j,i)$ and zeros in all other positions.
By combining equations \eqref{eq:col_vec} and \eqref{eq:vec_half_vec}, we obtain,
\begin{equation}\label{eq:col_half_vec}
  \boldsymbol{s}_j = T_j\boldsymbol{\alpha},
\end{equation}
where $T_j=E_jD_J$ is a $J\times J(J+1)/2$ transformation matrix.

Given $\bfalpha$, we first sample $NJ$ augmentation variables $\Omega=(\omega_{ij})_{N\times J}$ from  
\begin{equation*}
\omega_{ij}\mid \mathcal{Y},\bfalpha\sim PG\left(1,\sigma_{\omega,ij}^2\right),
\end{equation*}
where $\sigma_{\omega,ij}^2$ is the $(i,j)$th entry of $\Sigma_\omega = \mathcal{Y}\bfS - (\mathcal{Y}-\frac{1}{2}\mathbf{1}_N\mathbf{1}_J^\top)\circ \mathbf{1}_N\text{diag}(\bfS)^\top$. $\bfS=\text{vech}^{-1}(\bfalpha),$ $\text{diag}(\bfS)$ is the diagonal vector of the matrix $\bfS$, and $\circ$ is the Hadamard product. Furthermore, given the above transformation, the sampling of $\bfS$ can be done instead by sampling $\bfalpha$ from its posterior with a similar {P{\'o}lya}-Gamma augmentation procedure. Specifically, the pseudo likelihood with {P{\'o}lya}-Gamma augmentation is
\begin{equation}\label{eq:ising_model_pseudo_pg}
  \begin{aligned}
    p^*(\mathcal{Y}\mid\bfS,\Omega) &= \prod_{j=1}^J p(\mathcal{Y}_j\mid\mathcal{Y}_{-j},\bfs_j,\boldsymbol{\omega}_j)\\
    &\propto \exp\left[- \frac{1}{2}\sum_{j=1}^J \left( \bfs_j^\top(\mathcal{Y} - \bfkappa_j\boldsymbol{e}_j^\top)^\top D_{\omega_j}(\mathcal{Y} - \bfkappa_j\boldsymbol{e}_j^\top)\bfs_j -2\bfkappa_j^\top(\mathcal{Y} - \bfkappa_j\boldsymbol{e}_j^\top)\bfs_j \right)\right].
  \end{aligned}
\end{equation}
Then we have the posterior of $\bfS$
\begin{equation}\label{eq:conditional_s}
  \begin{aligned}
    p&(\bfS\mid\mathcal{Y},\Omega) \propto p^*(\mathcal{Y}\mid\bfS,\Omega)\pi(\bfS)\\
    &\propto \exp\left[- \frac{1}{2}\sum_{j=1}^J \left( \bbb_j^\top(\mathcal{Y} - \bfkappa_j\boldsymbol{e}_j^\top)^\top D_{\omega_j}(\mathcal{Y} - \bfkappa_j\boldsymbol{e}_j^\top)\bbb_j -2\bfkappa_j^\top(\mathcal{Y} - \bfkappa_j\boldsymbol{e}_j^\top)\bbb_j \right)-\frac{1}{2}\sum_{j=1}^J\bbb_j^\top D_{s_j}\bbb_j\right]\\
    &= \exp\left\{-\frac{1}{2}\sum_{j=1}^J\left[\bbb_j^\top((\mathcal{Y} - \bfkappa_j\boldsymbol{e}_j^\top)^\top D_{\omega_j}(\mathcal{Y} - \bfkappa_j\boldsymbol{e}_j^\top)+D_{s_j})\bbb_j - 2\bfkappa_j^\top(\mathcal{Y} - \bfkappa_j\boldsymbol{e}_j^\top)\bbb_j\right]\right\}.
  \end{aligned}
\end{equation}
Plugging \eqref{eq:col_half_vec} into \eqref{eq:conditional_s} we have,
\begin{equation}
  \begin{aligned}
    p(\bfalpha&\mid\mathcal{Y},\Omega) \\
    &\propto \exp\left\{-\frac{1}{2}\sum_{j=1}^J\left[\bfalpha^\top T_j^\top((\mathcal{Y} - \bfkappa_j\boldsymbol{e}_j^\top)^\top D_{\omega_j}(\mathcal{Y} - \bfkappa_j\boldsymbol{e}_j^\top)+D_{s_j})T_j\bfalpha - 2\bfkappa_j^\top(\mathcal{Y} - \bfkappa_j\boldsymbol{e}_j^\top)T_j\bfalpha\right]\right\}\\
    &\propto\exp\Bigg\{-\frac{1}{2}\Bigg[\bfalpha^\top\left(\sum_{j=1}^JT_j^\top((\mathcal{Y} - \bfkappa_j\boldsymbol{e}_j^\top)^\top D_{\omega_j}(\mathcal{Y} - \bfkappa_j\boldsymbol{e}_j^\top)+D_{s_j})T_j\right)\bfalpha\\
    &\quad -2\left(\sum_{j=1}^J((\mathcal{Y} - \bfkappa_j\boldsymbol{e}_j^\top)T_j)^\top\bfkappa_j\right)^\top\bfalpha\Bigg]\Bigg\}\\
    &\propto\exp\left[-\frac{1}{2}(\bfalpha-\bfmu_\alpha)^\top\Sigma_\alpha^{-1}(\bfalpha-\bfmu_\alpha) \right],
  \end{aligned}
\end{equation}
where
\begin{equation}
  \Sigma_\alpha = \left[\sum_{j=1}^J T_j^\top((\mathcal{Y} - \bfkappa_j\boldsymbol{e}_j^\top)^\top D_{\omega_j}(\mathcal{Y} - \bfkappa_j\boldsymbol{e}_j^\top)+D_{s_j})T_j\right]^{-1}, \boldsymbol{\mu}_\alpha = \Sigma_\alpha\left[\sum_{j=1}^J ((\mathcal{Y} - \bfkappa_j\boldsymbol{e}_j^\top)T_j)^\top\boldsymbol{\kappa}_j\right],
\end{equation}
which can be further simplified as below.

In summary, the posterior of $\bfalpha$ is 
\begin{equation}\label{eq:cond_alpha_pg}
    \begin{aligned}
        \bfalpha&\mid\mathcal{Y},D_{\omega}\sim N(\boldsymbol{\mu}_\alpha,\Sigma_\alpha),\\
        \Sigma_\alpha &= \left[ M^\top D_{\omega} M + T^\top D_S T\right]^{-1},\\
        \boldsymbol{\mu}_\alpha&= \Sigma_\alpha M^\top\bfkappa,
    \end{aligned}
\end{equation}
where $\mathcal{Y} = (\bfy_1,\ldots,\bfy_N)^\top,$ $M = ([(\mathcal{Y} - \bfkappa_1\boldsymbol{e}_1^\top) T_1]^\top,\ldots,[(\mathcal{Y} - \bfkappa_j\boldsymbol{e}_j^\top) T_J]^\top)^\top$, $D_\omega = \text{diag}(\boldsymbol{\omega})$, $\boldsymbol{\omega}=(\omega_{11},\ldots,\omega_{N1},\omega_{12},\ldots,\omega_{NJ})^\top,$ $T = (T_1^\top,\ldots,T_J^\top)^\top$, $\bfkappa=(\bfkappa_1^\top,\ldots,\bfkappa_J^\top)^\top$, and $D_S = \text{diag}(\boldsymbol{\tau}),$  where $\boldsymbol{\tau}=(\tau_{11},\ldots,\tau_{J1},\tau_{12},\ldots,\tau_{JJ})^\top$, $\tau_{jl}=\sigma_1^{-2},$ for $l\neq j$ and $\tau_{jj}=\sigma_2^{-2}.$ 

Instead of conventional matrix inversion for calculating $\Sigma_\alpha$, we use Cholesky decomposition of a symmetric positive definite (SPD) matrix, which offers enhanced efficiency and numerical stability \citep[]{blake2023fast}. More precisely, we start by performing the Cholesky decomposition of $\Sigma_\alpha^{-1} = LL^\top$, and then proceed to solve two triangular systems: i) $LY = I$, and ii) $L^\top X = Y$, thus deriving $X = \Sigma_\alpha$.

\section{Detailed settings for simulations}\label{sup:simu1_s}

\subsection{True parameters used in simulation Study I}\label{app:true_s_simu1}
\begin{table}[H]
\centering
\begin{tabular}{c|cccccc}
\toprule
 $\mathbf{S}$ & node 1 & node 2 & node 3 & node 4 & node 5 & node 6 \\
\midrule
node 1 & 0.000 & - & - & - & - & - \\
node 2 & -0.737 & 0.000 & - & - & - & - \\
node 3 & 0.000 & -0.408 & 0.000 & - & - & - \\
node 4 & 0.000 & 0.000 & 0.619 & 0.000 & - & - \\
node 5 & 0.769 & 0.000 & 0.000 & 0.000 & 0.000 & - \\
node 6 & 0.791 & 0.000 & 0.000 & 0.000 & 0.741 & 0.000 \\
\bottomrule
\end{tabular}
\caption{The true parameters used in the simulation Study I.}
\label{tab:simu1_s}
\end{table}

\subsection{MAR settings in simulation Study I}\label{app:true_mis_simu1}
\begin{table}[H]
\centering
\begin{tabular}{c|ccccc}
\toprule
 $p(z_{ij}=0\mid y_{i6})$ & $z_{i1}$ & $z_{i2}$ & $z_{i3}$ & $z_{i4}$ & $z_{i5}$  \\
\midrule
$y_{i6}=1$ & 0.2 & 0.0 & 0.1 & 0.1 & 0.2 \\
$y_{i6}=0$ & 0.0 & 0.1 & 0.1 & 0.0 & 0.3 \\
\bottomrule
\end{tabular}
\caption{The probabilities of missingness for the first five variables conditioning on the sixth variable.}
\label{tab:simu1_mis}
\end{table}

\subsection{True parameters used in simulation Study II}\label{app:true_s_simu2}
\begin{table}[H]
  \centering
  \label{tab:NewIsingModelParameters}
  \begin{tabular}{@{}c|cccccc@{}}
  \toprule
  $\mathbf{S}$ & node 1 & node 2 & node 3 & node 4 & node 5 & node 6 \\
  \midrule
node 1 & 0.000 & - & - & - & - & -\\
node 2 & 0.500 & 0.000 & - & - & - & -\\
node 3 & 0.000 & 0.514 & 0.000 & - & - & -\\
node 4 & 0.000 & 0.000 & 0.865 & 0.000 & - & -\\
node 5 & 1.115 & 0.000 & 0.000 & 0.000 & 0.000 & -\\
node 6 & 1.151 & 0.000 & 0.000 & 0.000 & 1.068 & 0.000\\
  \bottomrule
  \end{tabular}
  \caption{The true parameters used in the simulation Study II.}
\end{table}


\section{Study III: A fifteen-node example.}\label{sup:add_simu}

We further simulate a  fifteen-node Ising model to demonstrate the performance of the proposed method in terms of parameter estimation and edge selection.
Similar to the six-node scenario, we generate model parameters by randomly setting 70\% of the edge parameters $s_{jl}$ to zero to create a sparse network. The true parameter values can be found in the Table~\ref{tab:simu3_s}. 
To simulate missing data, we implement the MCAR mechanism and randomly label 50\% of the data entries as missing.
Data are generated for four sample sizes of $N=$ 1,000, 2,000, 4,000, and 8,000, following the specified Ising model parameters and MCAR mechanism. For each sample size, 50 independent replications are generated.
Algorithm~\ref{alg:pseudo_imputation} is applied to these 
datasets, where a random starting point is used as in the six-node example.  We set the MCMC iterations length to $T=$ 5,000 and a burn-in size $T_0=$ 1,000.

The resulting MSEs for edge parameter estimation under various sample sizes are displayed in Figure \ref{fig:sim2}(a), where each box corresponds to the MSEs for 105 edge parameters. 
As we can see, the MSEs decrease toward zero as the sample size increases. Furthermore, by employing a hard thresholding step after edge parameter estimation, a receiver operating characteristic (ROC) curve  is created for edge selection under each setting, and the corresponding Area Under the Curve (AUC) is calculated to evaluate the performance of the proposed method.  That is, given a hard threshold $\tau$, the True Positive Rate (TPR) and False Positive Rate (FPR) are calculated as  
\begin{equation}
    \text{TPR}(\tau) = \frac{\sum_{k=1}^{50}\sum_{j<l} \mathbf{1}_{\{\vert \hat{s}_{k,jl}\vert>\tau \text{ and } s_{0,jl}\neq 0\}}}{50\times \sum_{j<l}\mathbf{1}_{\{s_{0,jl}\neq 0\}}},\quad \text{FPR}(\tau) = \frac{\sum_{k=1}^{50} \sum_{j<l}\mathbf{1}_{\{\vert \hat{s}_{k,jl}\vert>\tau \text{ and } s_{0,jl}=0\}}}{50\times\sum_{j<l}\mathbf{1}_{\{s_{0,jl}=0\} }}.
\end{equation}
A ROC curve is obtained by varying the value of $\tau$. 
The ROC curves and the corresponding AUC values are given in Figure~\ref{fig:sim2}(b). The curves of TPR and FPR varies with threshold are displayed in Figure~\ref{fig:sim2-1}(a) and Figure~\ref{fig:sim2-1}(b), respectively.
Furthermore, we show scatter plots of the estimated edge parameters against the true values for a single replication in Figure~\ref{fig:sim2-2}, which indicate that the proposed method provides a good estimation of the edge parameters. Figure \ref{fig:sim2-3} illustrates the similarity between the estimated network against the true network, measured by the Jaccard coefficient. Given a threshold $\tau$, the Jaccard coefficient is defined by
\begin{equation}
 \text{Jacc}(\tau) = \frac{\sum_{k=1}^{50}\sum_{j<l} \mathbf{1}_{\{\vert \hat{s}_{k,jl}\vert>\tau \text{ and } s_{0,jl}\neq 0\}}}{\sum_{k=1}^{50}\sum_{j<l} \mathbf{1}_{\{\vert \hat{s}_{k,jl}\vert>\tau \text{ or } s_{0,jl}\neq 0\}}}.
\end{equation}
The Jaccard coefficient shows good consistency of the estimated edges with the true edges. Finally, we show the true versus estimated networks for a single replication in Figure \ref{fig:sim2-4}, further validates that the proposed method can accurately recover the true network structure.
 
\begin{table}[H]
  \centering
  \footnotesize
  \begin{tabular}{c|ccccccccccccccc}
  \toprule
  $\mathbf{S}$ & 1 & 2 & 3 & 4 & 5 & 6 & 7 & 8 & 9 & 10 & 11 & 12 & 13 & 14 & 15\\
  \midrule
  1 & -  &  &  &  &  &  &  &  &  &  &  &  &  &  & \\
  2 & -  & -  &  &  &  &  &  &  &  &  &  &  &  &  & \\
  3 & -  & -0.96 & -  &  &  &  &  &  &  &  &  &  &  &  & \\
  4 & -  & 1.00 & 0.5 & -  &  &  &  &  &  &  &  &  &  &  & \\
  5 & 0.48 & -  & -  & -  & -  &  &  &  &  &  &  &  &  &  & \\
  6 & -  & -  & 0.95 & -  & 0.47 & -  &  &  &  &  &  &  &  &  & \\
  7 & -  & -  & -  & 0.55 & -0.92 & -  & -  &  &  &  &  &  &  &  & \\
  8 & -  & -  & -  & -  & -  & 0.98 & 0.74 & -  &  &  &  &  &  &  & \\
  9 & -0.41 & -  & -0.54 & -  & -  & -  & -  & -  & -  &  &  &  &  &  & \\
  10 & -  & -  & -0.47 & -  & -  & -  & -0.83 & -  & -0.41 & -  &  &  &  &  & \\
  11 & -  & -  & -0.74 & 0.52 & -0.55 & 0.85 & -  & 0.75 & -0.98 & -  & -  &  &  &  & \\
  12 & -0.54 & -  & 0.77 & -  & -  & 0.41 & -  & -  & -  & -0.74 & -  & -  &  &  & \\
  13 & -  & -  & -  & -  & -  & -0.8 & 0.56 & -  & -  & -0.78 & -  & -  & -  &  & \\
  14 & -  & -  & -  & -  & -  & -  & 0.95 & -  & -  & -  & -  & -  & -0.96 & -  & \\
  15 & -  & -  & -  & -  & -0.97 & -  & -  & 0.78 & -  & -  & -  & -  & -  & -  & - \\
  \bottomrule
  \end{tabular}
  \caption{The true parameters used in the 15-node Ising model simulation.}
  \label{tab:simu3_s}
  \end{table}

\begin{figure}[H]
\centering
\begin{subfigure}{.49\textwidth}
  \centering
  \includegraphics[width=.99\linewidth]{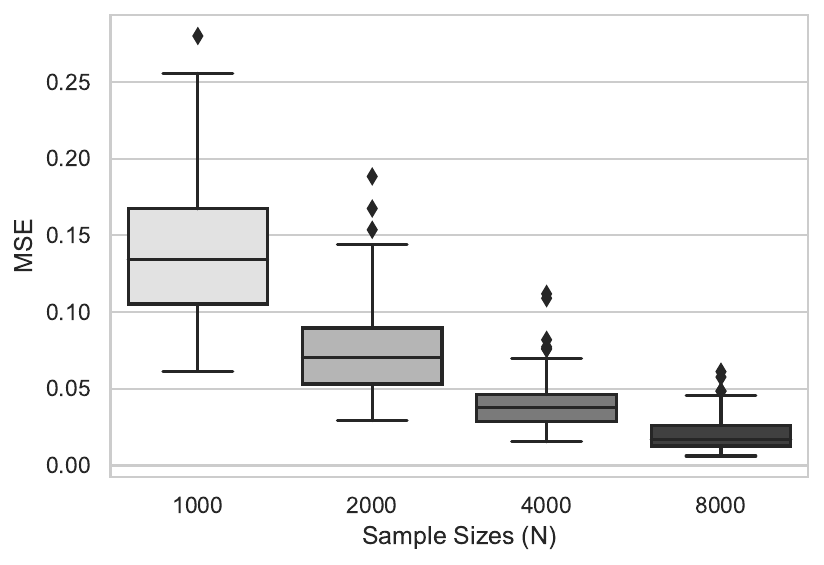}
  \caption{}
\end{subfigure}%
\begin{subfigure}{.49\textwidth}
  \centering
  \includegraphics[width=.99\linewidth]{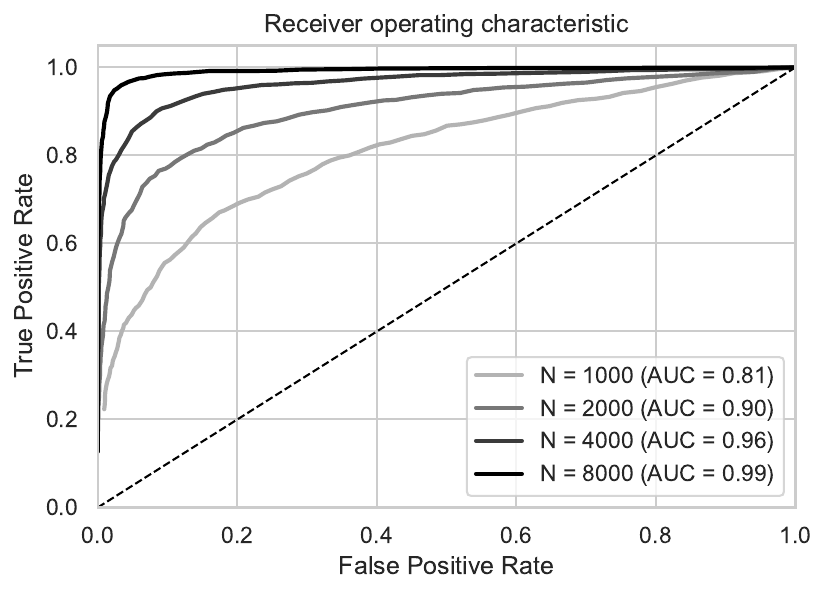}
  \caption{}
\end{subfigure}
\caption{Panel (a): Boxplots of MSEs for edge parameters. Panel (b): The ROC curves for edge selection and the corresponding AUCs.}
\label{fig:sim2}
\end{figure}

\begin{figure}[H]
  \centering
  \begin{subfigure}{.49\textwidth}
    \centering
    \includegraphics[width=.99\linewidth]{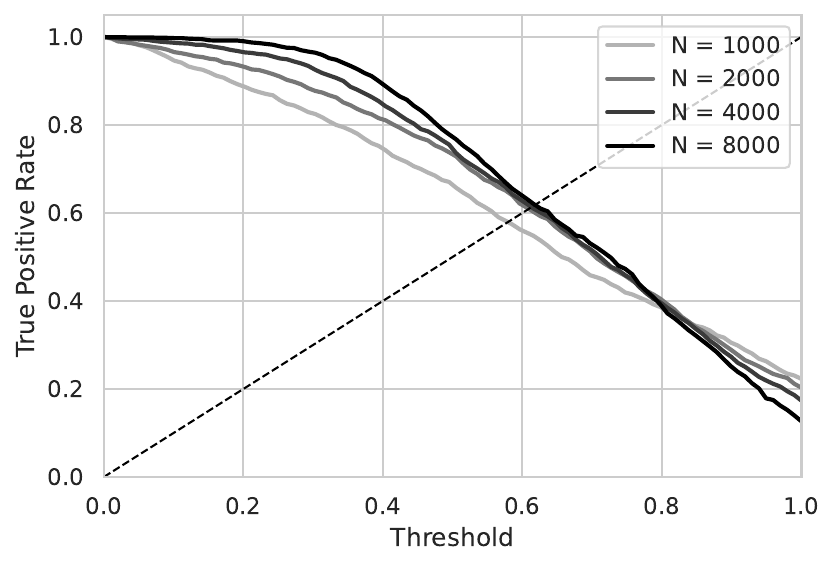}
    \caption{}
  \end{subfigure}%
  \begin{subfigure}{.49\textwidth}
    \centering
    \includegraphics[width=.99\linewidth]{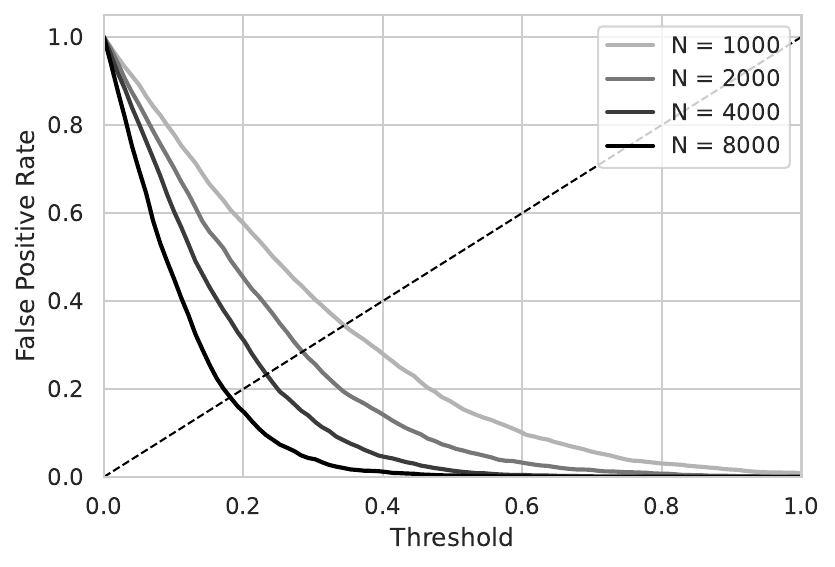}
    \caption{}
  \end{subfigure}
  \caption{Panel (a): True positive rate for edge selection. Panel (b): False positive rate for edge selection.}
  \label{fig:sim2-1}
  \end{figure}

\begin{figure}[H]
  \centering
  \begin{subfigure}{.49\textwidth}
    \centering
    \includegraphics[width=.99\linewidth]{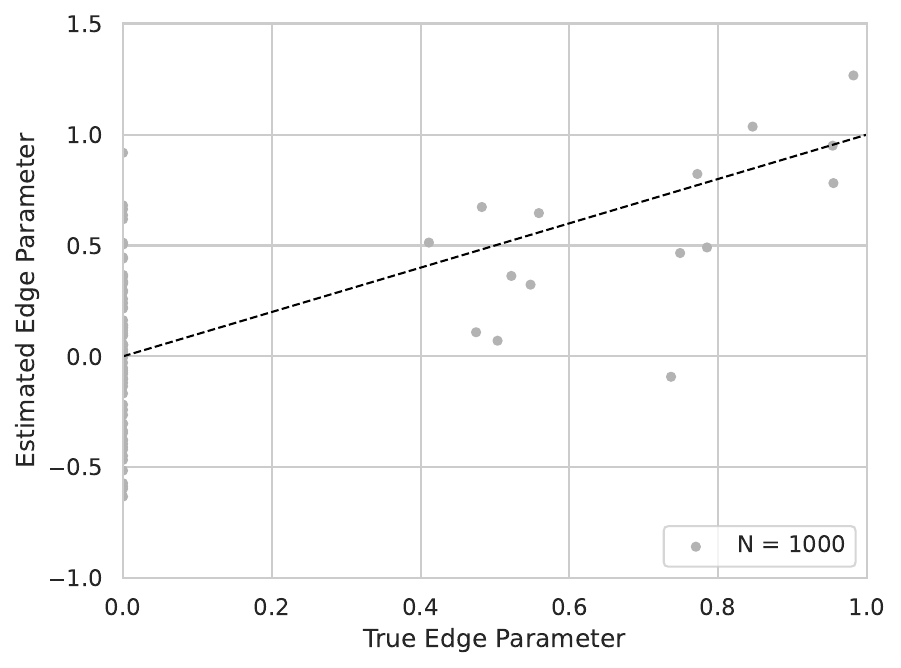}
    \caption{}
  \end{subfigure}%
  \begin{subfigure}{.49\textwidth}
    \centering
    \includegraphics[width=.99\linewidth]{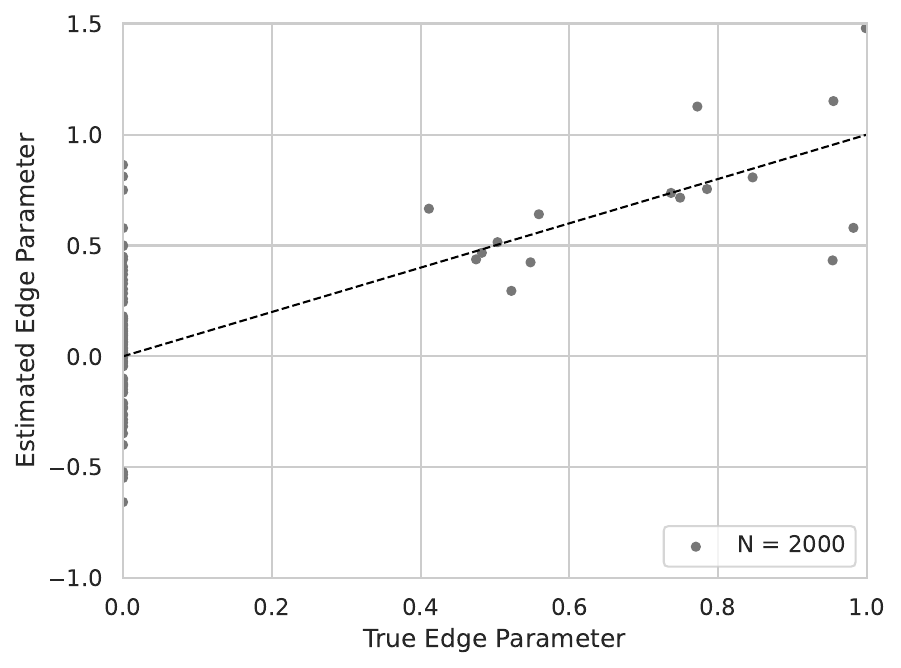}
    \caption{}
  \end{subfigure}
  \\
  \begin{subfigure}{.49\textwidth}
    \centering
    \includegraphics[width=.99\linewidth]{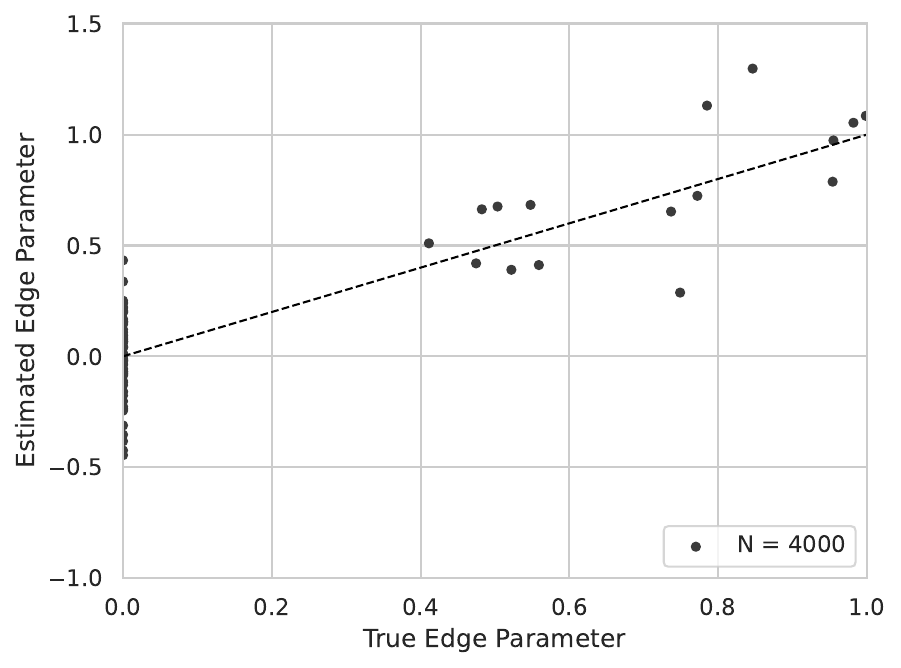}
    \caption{}
  \end{subfigure}
  \begin{subfigure}{.49\textwidth}
    \centering
    \includegraphics[width=.99\linewidth]{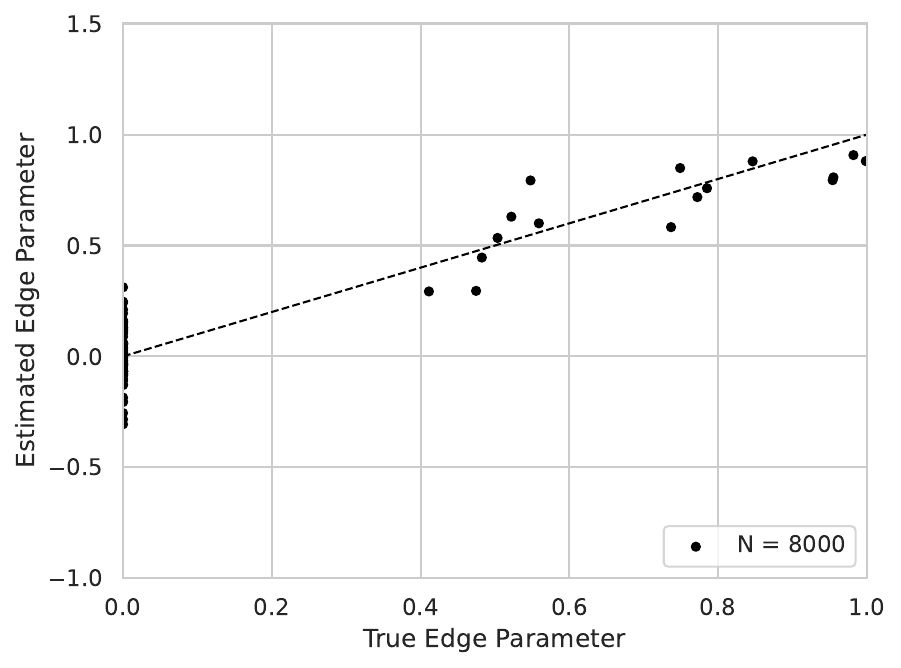}
    \caption{}
  \end{subfigure}
  \caption{Scatter plots of edge parameters across different sample sizes.}
  \label{fig:sim2-2}
  \end{figure}

\begin{figure}[H]
    \centering
    \includegraphics[width=.5\linewidth]{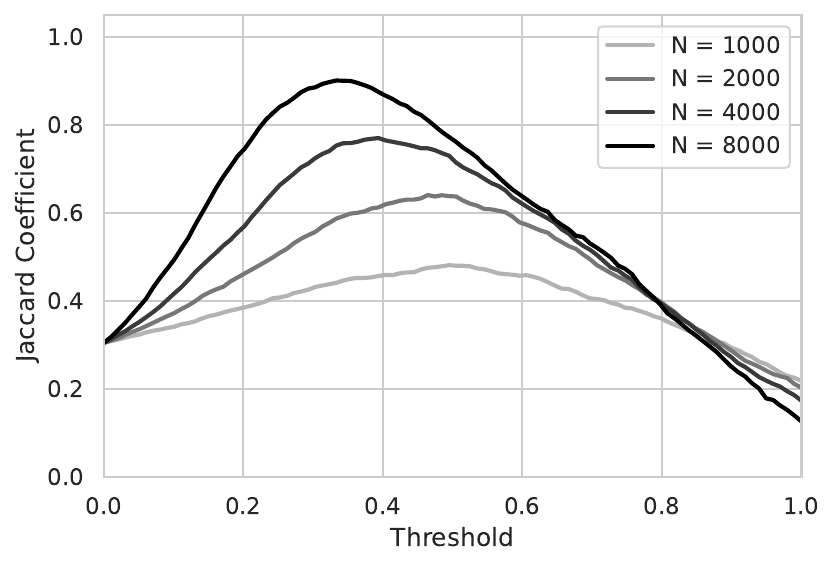}
  \caption{The Jaccard coefficients for similarity between estimated and true edges.}
  \label{fig:sim2-3}
  \end{figure}

\begin{figure}[H]
  \centering
  \begin{subfigure}{.5\textwidth}
    \centering
    \includegraphics[width=.99\linewidth]{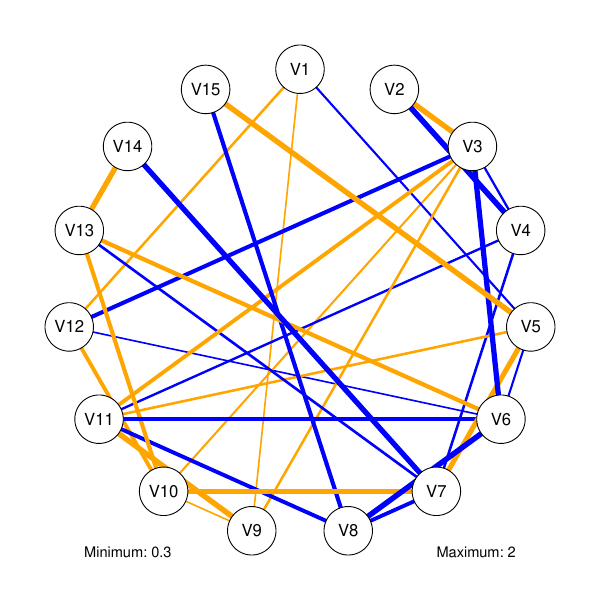}
    \caption{ }
  \end{subfigure}%
  \begin{subfigure}{.5\textwidth}
    \centering
    \includegraphics[width=.99\linewidth]{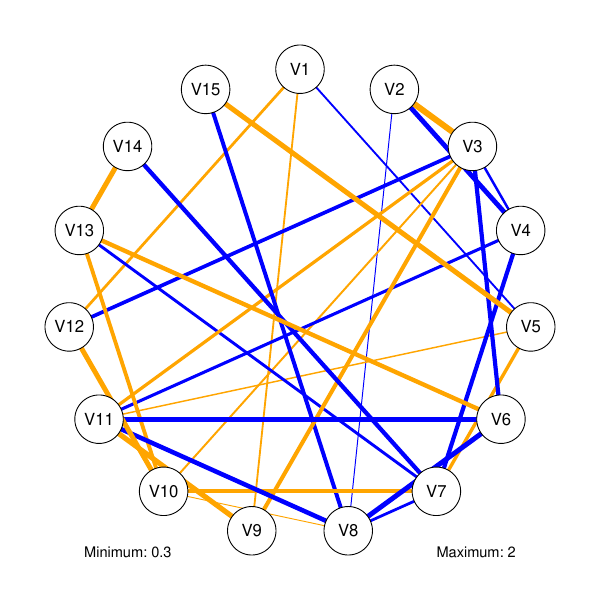}
    \caption{ }
  \end{subfigure}
  \caption{Estimated network structure for Simulation Study III. Panel (a): True network. Panel (b) Estimated network. Blue for positive edges and orange for negative edges.}
  \label{fig:sim2-4}
\end{figure}

\bibliographystyle{apacite}
\bibliography{ref}

\end{document}